\newif\ifarxiv
\crefname{theorem}{Theorem}{Theorems}
\crefname{lemma}{Lemma}{Lemmata}
\crefname{definition}{Definition}{Definitions}
\crefname{corollary}{Corollary}{Corollaries}
\crefname{proposition}{Proposition}{Propositions}
\crefname{equation}{Equation}{Equations}
\crefname{section}{Section}{Sections}
\crefname{figure}{Figure}{Figures}
\crefname{appendix}{Appendix}{Appendices} 
\crefname{algocf}{Algorithm}{Algorithms} 
\Crefname{theorem}{Thm}{Thm}
\Crefname{lemma}{Lemma}{Lemmata}
\Crefname{corollary}{Cor}{Cor}
\Crefname{proposition}{Prop}{Prop}
\newtheorem{corollary}{Corollary}
\newtheorem{lemma}{Lemma}
\theoremstyle{definition}
\newtheorem{definition}{Definition}
\newcolumntype{\ipMR}{>{$\displaystyle\,}r<{$}@{\hspace{0.0em}}}
\newcolumntype{\ipMRspace}{>{$\displaystyle}r<{$}@{\hspace{0.2em}}}
\newcolumntype{\ipMC}{>{$\displaystyle\,}c<{$}@{\hspace{0.0em}}}
\newcolumntype{\ipMLspace}{>{$\displaystyle}l<{$}@{\hspace{0.2em}}}
\newcolumntype{\ipMLmorespace}{>{$\displaystyle}l<{$}@{\hspace{0.3em}}}
\newcounter{IPnumber}
\newcommand{\tagIt}[1]{\refstepcounter{equation}\textnormal{({\theequation})} \label{#1} }
\newcommand{\eqnum}{\leavevmode\hfill\refstepcounter{equation}\textup{\tagform@{\theequation}}}
\newcommand{\nosemic}{\renewcommand{\@endalgocfline}{\relax}}
\newcommand{\dosemic}{\renewcommand{\@endalgocfline}{\algocf@endline}}
\let\oldnl\nl
\newcommand{\nonl}{\renewcommand{\nl}{\let\nl\oldnl}}
\newcommand{\removelatexerror}{\let\@latex@error\@gobble}
\newcounter{ipCounter}
\NewDocumentEnvironment{IPFormulation}{m}{%
\refstepcounter{ipCounter}
\begin{algorithm}[#1]%

}{%
\end{algorithm}
\addtocounter{algocf}{-1}
}
\NewDocumentEnvironment{IPFormulationStar}{m}{%
\refstepcounter{ipCounter}
\begin{algorithm*}[#1]%

}{%
\end{algorithm*}
\addtocounter{algocf}{-1}
}
\DeclareSymbolFont{stixSymbolFont}{LS1}{stixfrak} {m} {n}
\DeclareSymbolFont{largesymbolsstix}{LS2}{stixex}{m}{n}
\DeclareMathDelimiter{\llangle}{\mathopen}{stixSymbolFont}{"28}{stixSymbolFont}{"28}
\DeclareMathDelimiter{\rrangle}{\mathclose}{stixSymbolFont}{"29}{stixSymbolFont}{"29}
\DeclareMathDelimiter{\lbrbrak}{\mathopen}{largesymbolsstix}{"EE}{largesymbolsstix}{"14}
\DeclareMathDelimiter{\rbrbrak}{\mathclose}{largesymbolsstix}{"EF}{largesymbolsstix}{"15}
\DeclareDocumentCommand{\restrict}{d[] d[]}{\ensuremath{\llangle #1 \vert #2 \rrangle}}
\DeclareDocumentCommand{\subLP}{d[] d[]}{\ensuremath{#1 \llbracket #2 \rrbracket}}
\newcommand{\preals}{\ensuremath{\mathbb{R}_{\geq 0}}}
\newcommand{\N}{\mathbb{N}}
\newcommand{\req}{R}
\newcommand{\VG}[1][\req]{\ensuremath{G_{#1}}}
\newcommand{\VV}[1][\req]{\ensuremath{V_{#1}}}
\newcommand{\VE}[1][\req]{\ensuremath{E_{#1}}}
\newcommand{\Vcap}[1][\req]{\ensuremath{d_{#1}}}
\newcommand{\SG}{\ensuremath{G_S}}
\newcommand{\SV}{\ensuremath{V_S}}
\newcommand{\SE}{\ensuremath{E_S}}
\newcommand{\tildeSG}{\ensuremath{\widetilde G_S}}
\newcommand{\tildeSV}{\ensuremath{\widetilde V_S}}
\newcommand{\tildeSE}{\ensuremath{\widetilde E_S}}
\newcommand{\Scap}{\ensuremath{d_{S}}}
\newcommand{\tildeScap}{\ensuremath{\widetilde d_{S}}}
\newcommand{\Scost}{\ensuremath{c_{S}}}
\newcommand{\tildeScost}{\ensuremath{\widetilde c_{S}}}
\newcommand{\map}[1][\req]{\ensuremath{m_{#1}}}
\newcommand{\mapV}[1][\req]{\ensuremath{m^V_{#1}}}
\newcommand{\mapE}[1][\req]{\ensuremath{m^E_{#1}}}
\newcommand{\tildemapV}[1][\req]{\ensuremath{\widetilde{m}^V_{#1}}}
\newcommand{\tildemapE}[1][\req]{\ensuremath{\widetilde{m}^E_{#1}}}
\newcommand{\cut}{\ensuremath{\mathrm{cut}}}
\NewDocumentCommand{\decomp}{O{\req} O{k}}{\ensuremath{D_{#1}^{#2}}}
\NewDocumentCommand{\decompHat}{O{\req} O{k}}{\ensuremath{{\hat{D}}_{#1}^{#2}}}
\NewDocumentCommand{\prob}{O{\req} O{k}}{\ensuremath{f_{#1}^{#2}}}
\NewDocumentCommand{\mapping}{O{\req} O{k}}{\ensuremath{m_{#1}^{#2}}}
\NewDocumentCommand{\PotEmbeddings}{O{\req}}{\ensuremath{\mathcal{D}_{#1}}}
\NewDocumentCommand{\PotEmbeddingsHat}{O{\req}}{\ensuremath{\hat{\mathcal{D}}_{#1}}}
\NewDocumentCommand{\maxDemandV}{O{\req} O{u}}{\ensuremath{{d}_{\textnormal{max}}(#1,#2)}}
\NewDocumentCommand{\maxDemandE}{O{\req} O{u} O{v}}{\ensuremath{d_{\textnormal{max}} ({#1,#2,#3})}}
\NewDocumentCommand{\maxDemandX}{O{\req} O{x}}{\ensuremath{d_{\textnormal{max}} ({#1,#2})}}
\NewDocumentCommand{\maxAllocV}{O{\req} O{u}}{\ensuremath{{A_{\textnormal{max}}}({#1,#2})}}
\NewDocumentCommand{\maxAllocE}{O{\req} O{u} O{v}}{\ensuremath{ {A_{\textnormal{max}}}({#1,#2,#3})}}
\NewDocumentCommand{\maxAllocX}{O{\req} O{x}}{\ensuremath{ {A_{\textnormal{max}}}({#1,#2})}}
\NewDocumentCommand{\alloc}{O{x} O{\map}}{\ensuremath{{A}({#2,#1})}}
\newcommand{\simplePaths}{\ensuremath{\mathcal{P}_S}}
\newcommand{\compP}{\ensuremath{\textsc{P}}}
\newcommand{\compNP}{\ensuremath{\textsc{NP}}}
\newcommand{\NP}{\ensuremath{\textsc{NP}}}
\newcommand{\compPeqNP}{\ensuremath{\compP{\,=\,}\compNP}}
\newcommand{\poly}{\ensuremath{\mathsf{poly}}}
\newcommand{\VNEP}{\textsc{VNEP}}
\newcommand{\minVNEP}{\textsc{Min-VNEP}}
\newcommand{\FPT}{\ensuremath{\textsc{FPT}}}
\newcommand{\Wone}{\ensuremath{\textsc{W[1]}}}
\newcommand{\XP}{\ensuremath{\textsc{XP}}}
\newcommand{\VMP}{\textnormal{\textsc{VMP}}}
\newcommand{\tw}{\ensuremath{\mathrm{tw}}}
\newcommand{\TVbfs}[1][\req]{\ensuremath{V_{T}}}
\newcommand{\TEbfs}[1][\req]{\ensuremath{\overrightarrow{E}_{T}}}
\NewDocumentCommand{\TbagInter}{O{t_1} O{t_2}}{\ensuremath{B_{#1 \cap #2}}}
\NewDocumentCommand{\MappingSpace}{d[]}{\ensuremath{\mathcal{M}(#1)}}
\NewDocumentCommand{\tij}{O{i,j}}{\ensuremath{t_{#1}}}
\NewDocumentCommand{\ti}{O{i}}{\ensuremath{t_{#1}}}
\NewDocumentCommand{\TABentry}{O{t} O{\mapV}}{\ensuremath{\mathbb{T}_{#1}[#2]}}
\NewDocumentCommand{\TABmapping}{O{t} O{\mapV}}{\ensuremath{\mathbb{M}[{#1}][#2]}}
\NewDocumentCommand{\TABMappingCost}{O{t} O{\mapV}}{\ensuremath{\mathbb{C}[{#1}][#2]}}
\NewDocumentCommand{\TABpath}{O{i,j} O{u,v}}{\ensuremath{\mathbb{P}[#1][#2]}}
\NewDocumentCommand{\TABPathCost}{O{i,j} O{u,v}}{\ensuremath{\mathbb{C}_{\mathbb{P}}[{#1}][#2]}}
\newcommand{\appref}[1]{\hyperref[proof:#1]{\appsymb}}
\newcommand{\appendixsection}[1]{%
		\ifarxiv{}\else{}%
	\gappto{\appendixProofText}{\label{app:#1}}
	\fi{}%
}
\newcommand{\toappendix}[1]{%
		\ifarxiv{}%
		#1%
		\else{}%
  \gappto{\appendixProofText}
  {{%
    #1%
  }}%
  \fi{}%
}
\newcommand{\appendixproof}[2]{%
		\ifarxiv{}%
				#2%
\else{}%
  \gappto{\appendixProofText}
  {
    \subsection{Proof of \autoref{#1}}\label{proof:#1}
    #2
  }
		\fi{}%
}
\begin{document}

\newcommand{\thetitle}{Optimal Virtual Network Embeddings for Tree Topologies}

\ifarxiv{}

\title{\Large\bf\thetitle\footnote{An extended abstract of this work appears in the Proceedings of the 33rd ACM Symposium on Parallelism in Algorithms and Architectures (SPAA '21).}}

\author[1]{Aleksander~Figiel\thanks{Supported by DFG, Project MaMu NI 369/19. E-Mail: \texttt{aleksander.figiel@campus.tu-berlin.de}}}
\author[1]{Leon~Kellerhals\thanks{E-Mail: \texttt{leon.kellerhals@tu-berlin.de}}}
\author[1]{Rolf~Niedermeier\thanks{E-Mail: \texttt{rolf.niedermeier@tu-berlin.de}}}
\author[2,3]{Matthias~Rost\thanks{E-Mail: \texttt{matthias.johannes.rost@sap.com}}}
\author[2,4]{Stefan~Schmid\thanks{Funded by European Research Council (ERC), grant agreement 864228 (AdjustNet).\\ \phantom{xyz}~E-Mail: \texttt{stefan\_schmid@univie.ac.at}}}
\author[1]{Philipp~Zschoche\thanks{E-Mail: \texttt{zschoche@tu-berlin.de}}}
\affil[1]{\small Technische Universität Berlin, Algorithmics~and~Computational~Complexity, Germany}
\affil[2]{\small Technische Universität Berlin, Data~Communications~and~Networking, Germany}
\affil[3]{\small SAP SE, Germany}
\affil[4]{\small University of Vienna, Austria}

\date{\vspace{-1cm}}

\else{}
\title{\thetitle}

\author{Aleksander Figiel}
\authornote{Supported by DFG, Project MaMu NI 369/19}
\orcid{}
\affiliation[obeypunctuation=true]{%
	\institution{Technische Universität Berlin}\\
	\department{Algorithmics and Computational Complexity}\\
	\streetaddress{Ernst-Reuter-Platz 7}
	\postcode{10587}
	\city{Berlin},
	\country{Germany}}
\author{Leon Kellerhals}
\orcid{}
\affiliation[obeypunctuation=true]{%
	\institution{Technische Universität Berlin}\\
	\department{Algorithmics and Computational Complexity}\\
	\streetaddress{Ernst-Reuter-Platz 7}
	\postcode{10587}
	\city{Berlin},
	\country{Germany}}
\author{Rolf Niedermeier}
\orcid{}
\affiliation[obeypunctuation=true]{%
	\institution{Technische Universität Berlin}\\
	\department{Algorithmics and Computational Complexity}\\
	\streetaddress{Ernst-Reuter-Platz 7}
	\postcode{10587}
	\city{Berlin},
	\country{Germany}}
\author{Matthias Rost}
\orcid{}
\affiliation[obeypunctuation=true]{%
	\institution{SAP SE}\\
	\department{}
	\streetaddress{Konrad-Zuse-Ring}
	\postcode{14469}
	\city{Potsdam},
	\country{Germany}
}
\affiliation[obeypunctuation=true]{%
	\institution{Technische Universität Berlin}\\
	\department{Data Communications and Networking}\\
	\streetaddress{Marchstraße 23}
	\postcode{10587}
	\city{Berlin},
	\country{Germany}}
\author{Stefan Schmid}
\authornote{Funded by European Research Council (ERC), grant agreement 864228 (AdjustNet).}
\orcid{}
\affiliation[obeypunctuation=true]{%
	\institution{University Vienna}\\
	\department{Faculty of Computer Science}\\
	\streetaddress{Währinger Strasse 29}
	\postcode{1090}
	\city{Vienna},
	\country{Austria}}
\affiliation[obeypunctuation=true]{%
	\institution{Technische Universität Berlin}\\
	\department{Data Communications and Networking}\\
	\streetaddress{Marchstraße 23}
	\postcode{10587}
	\city{Berlin},
	\country{Germany}}

\author{Philipp~Zschoche}
\orcid{}
\affiliation[obeypunctuation=true]{%
	\institution{Technische Universität Berlin}\\
	\department{Algorithmics and Computational Complexity}\\
	\streetaddress{Ernst-Reuter-Platz 7}
	\postcode{10587}
	\city{Berlin},
	\country{Germany}}

\fi{}

\ifarxiv{}
	\maketitle
\fi{}

\begin{abstract}
The performance of distributed and data-centric applications
often critically depends on the interconnecting network.
Applications are hence modeled as virtual networks, also accounting for resource demands on links.
At the heart of provisioning such virtual networks lies the NP-hard Virtual Network Embedding Problem~(VNEP): how to jointly map the virtual nodes and links onto a physical substrate network at minimum cost while obeying capacities.

This paper studies the VNEP in the light of parameterized complexity. We focus on tree topology substrates, a case often encountered in practice and for which the VNEP remains NP-hard.
We provide the first fixed-parameter algorithm for the VNEP with running time $O(3^r (s+r^2))$ for requests and substrates of $r$ and $s$ nodes, respectively.
In a computational study our algorithm yields running time improvements in excess of 200$\times$ compared to state-of-the-art integer programming approaches.
This makes it comparable in speed to the well-established ViNE heuristic while providing optimal solutions.
We complement our algorithmic study with hardness results for the VNEP and related problems.
\end{abstract}

\ifarxiv{}\else{}
\maketitle
\fi

\ifarxiv{}\else{}
\begin{CCSXML}
	<ccs2012>
	<concept>
	<concept_id>10003033.10003068.10003073.10003074</concept_id>
	<concept_desc>Networks~Network resources allocation</concept_desc>
	<concept_significance>500</concept_significance>
	</concept>
	<concept>
	<concept_id>10003752.10003809.10010052.10010053</concept_id>
	<concept_desc>Theory of computation~Fixed parameter tractability</concept_desc>
	<concept_significance>500</concept_significance>
	</concept>
	
	</ccs2012>
\end{CCSXML}

\ccsdesc[500]{Networks~Network resources allocation}
\ccsdesc[500]{Theory of computation~Fixed parameter tractability}

\keywords{Virtual Network Embedding, Fixed-Parameter Tractability, \break Dynamic Programming, Computational Complexity}
\renewcommand{\shortauthors}{A. Figiel, L. Kellerhals, R. Niedermeier, M. Rost, S. Schmid, and P. Zschoche}
\fi{}


\section{Introduction}\label{sec:intro}

Data-centric and distributed applications, 
including batch processing, streaming, scale-out data\-bases, 
or distributed machine learning, generate a significant amount of
network traffic and their performance critically depends
on the underlying network. 
As the network infrastructure is often shared and the bandwidth available 
can vary significantly over time, this can have a
non-negligible impact on the application performance \cite{talkabout}. 

Network virtualization has emerged as a promising solution to 
ensure a predictable application performance over shared infrastructures, 
by providing a virtual network abstraction which comes with explicit bandwidth 
guarantees~\cite{vnep}. 
In a nutshell,
a virtual network request is modeled as a directed graph $\VG=(\VV,\VE)$ whose elements are attributed with resource demands.
The nodes represent, e.g., containers or virtual machines, requesting, e.g., CPU cores and memory, while the edges represent communication channels of a certain bandwidth.
Formally, the demands of a virtual network request~$\VG$ are a
function~$\Vcap: \VG \to \preals^{\tau}$, $\tau \in \N$, of every node and every edge onto a~$\tau$-dimensional vector of nonnegative reals.
To provision such a virtual network request in a (shared) physical substrate network, also modeled as a directed graph~$\SG = (\SV, \SE)$ with capacities~$\Scap: \SG \to \preals^{\tau}$, we need to find an \emph{embedding} that maps the request nodes onto the substrate nodes and the request edges onto paths in the substrate while respecting capacities.


The NP-hard \textsc{Virtual Network Embedding Problem}, asking to find such embeddings, poses the main challenge of provisioning virtual networks and has been studied for various objectives~\cite{vnep-survey}.
In this paper, we study the following central cost-minimization variant (see \cref{def:vnep} for a formal definition):

\smallskip
\noindent\textsc{Minimum-Cost Virtual Network Embedding} (\minVNEP)
\begin{description}[labelwidth=1cm, leftmargin=1.1cm]
	\item[Input:]	A directed graph~$\SG = (\SV, \SE)$ on~$s$ nodes, called \emph{substrate}, and a directed graph $\VG = (\VV, \VE)$ on~$r$ nodes, called \emph{request}, with~$\tau$-dimensional demands~$\Vcap: \VG \to \preals^\tau$, capacities~$\Scap: \SG \to \preals^\tau$, and costs~$\Scost: \SG \to \preals^\tau$.
	\item[Task:] 
		Find mappings of the request onto substrate nodes and of the request edges onto paths in the substrate, such that
		\begin{enumerate}[label=(\arabic*), leftmargin=*]
			\item the node and edge capacities are respected by the node and edge mappings, and
			\item the cost of all nodes and edges used by the mapping is minimized.
		\end{enumerate}
%
%
%
\end{description}	

We remark that several other variants of the \textsc{Virtual Network Embedding Problem} can be reduced to \minVNEP{}  (see Section~\ref{sec:rel-work-and-novelty}).

%
%
%


\subsection{Contributions and Techniques}
\label{sec:contributions}
While the \minVNEP{} is known to be notoriously hard in general~\cite{ton20hard},
real-world network optimization problems often exhibit a specific structure.
In this work, we provide efficient, exact algorithms that exploit such a structural property.
Our main theoretical contribution is a fixed-parameter algorithm for the \minVNEP{} onto tree substrates when parameterized by the number of nodes in the request, that is, we present an algorithm which performs very well for small request graphs:

\begin{restatable}{theorem}{thmSimpleDP}
	\label{thm:simple-dp}
	$\minVNEP$ can be solved in $\mathcal O(3^{r}(s + r^2))$ time when the substrate~$\SG$ is a tree, where~$r = |\VV|$ and~$s = |\SV|$.
\end{restatable}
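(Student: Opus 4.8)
The plan is to exploit the defining structural property of a tree substrate: between any two substrate nodes there is a \emph{unique} path. Hence, once a node mapping $\mapV \colon \VV \to \SV$ is fixed, every request edge $(i,j)$ is forced onto the unique $\mapV(i)$--$\mapV(j)$ path, so nothing is left to decide for the edge mapping, and the entire cost and all capacity constraints are determined by $\mapV$ alone. First I would reorganize the objective and the edge-capacity constraints per substrate edge: root the tree at an arbitrary node $\rho$; each substrate edge $e$ splits the tree into two components, and a request edge $(i,j)$ traverses $e$ if and only if its endpoints are mapped to opposite sides of $e$. Consequently the load that a fixed $\mapV$ places on $e$ equals the cut demand $\cut(S_e)$, i.e.\ the total of $d_R(i,j)$ over request edges with exactly one endpoint in $S_e$, where $S_e \subseteq \VV$ is the set of request nodes mapped into the subtree below $e$. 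This is the key observation: both feasibility on $e$ and its cost contribution $\langle \Scost(e), \cut(S_e)\rangle$ depend only on the \emph{partition} of $\VV$ induced by $e$, and not on the rest of the mapping.

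Building on this, I would design a bottom-up dynamic program over the rooted substrate tree whose state is a subset of request nodes. For each substrate node $u$ and each $S \subseteq \VV$, let $A_u[S]$ be the minimum cost of mappings that place exactly the request nodes in $S$ into the subtree $T_u$, counting the node costs of $S$, the costs of all substrate edges strictly inside $T_u$, and respecting every node capacity in $T_u$ and every edge capacity strictly inside $T_u$; the load crossing the edge from $u$ to its parent is deferred and charged at the parent, since it equals $\cut(S)$. Leaves are initialized by the node-mapping table of $u$ (choosing which request nodes sit on $u$, subject to $\Scap(u)$). At an internal node with children $c_1,\dots,c_m$ I combine the children's tables with the set $W$ of request nodes mapped to $u$ itself: the combination ranges over pairwise disjoint $W, S_1,\dots,S_m$, sets $S = W \cup \bigcup_\ell S_\ell$, and adds $A_{c_\ell}[S_\ell]$ plus the edge term $\langle \Scost(u,c_\ell), \cut(S_\ell)\rangle$ after checking $\cut(S_\ell) \le \Scap(u,c_\ell)$. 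A bottom-up induction then shows that $A_\rho[\VV]$ equals the optimum (and is $+\infty$ exactly when no feasible embedding exists): telescoping the per-edge terms proves that each request edge $(i,j)$ is charged on precisely the substrate edges separating $i$ from $j$, that is, exactly along its unique path, so node and edge costs are each counted once and every capacity constraint is enforced at the unique place where the corresponding substrate element is introduced.

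The running-time analysis hinges on realizing the children-merge as a sequence of \emph{disjoint}-union combinations. Merging two subset-indexed tables over disjoint pairs $(S', S\setminus S')$ visits each assignment ``$\text{in } S'$ / $\text{in } S\setminus S'$ / in neither'' exactly once, so one merge costs $\mathcal O(3^{r})$; since the substrate tree has $s-1$ edges (one child-merge each) plus one node-incorporation step per substrate node, there are $\mathcal O(s)$ merges, contributing $\mathcal O(3^{r} s)$. Precomputing $\cut(S)$ for all $S \subseteq \VV$ (hence all per-edge crossing costs and capacity checks) takes $\mathcal O(2^{r} r^{2})$ by scanning the at most $r^2$ request edges for each subset; this is dominated by the merge cost and in any case bounded by $\mathcal O(3^{r} r^{2})$, yielding the stated $\mathcal O(3^{r}(s + r^{2}))$ bound, with the $\tau$-dimensional demands contributing only a constant ($\tau$) factor hidden in the arithmetic.

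The step I expect to be the main obstacle is the cost-accounting correctness rather than the running time: one must argue rigorously that deferring each substrate edge's load to its parent and charging it as $\cut(S_\ell)$ reproduces, after summation over the whole tree, exactly the cost of routing every request edge along its unique path, with no double counting and with every node- and edge-capacity constraint enforced exactly once. This is precisely where the tree structure (unique paths and nested subtrees) is indispensable, and where I would spend most of the proof; the $3^{r}$ bound then follows almost mechanically from the disjointness of the children's subsets.
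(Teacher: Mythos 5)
Your proposal is correct and is essentially the paper's own proof: the paper likewise precomputes the cut demands of all $2^r$ request-node subsets in $\mathcal O(2^r r^2)$ time and runs a subset-indexed dynamic program over the rooted substrate tree whose internal-node recurrence enumerates disjoint partitions in $\mathcal O(3^r)$ time, the only cosmetic difference being that the paper first preprocesses the substrate into a binary tree whose non-leaf nodes have capacity zero (so each internal node is a single two-way merge and each leaf receives its whole subset), whereas you merge children incrementally and fold in the nodes mapped to $u$ itself as one more disjoint-union step --- the same algorithm in different packaging. The one detail to tighten is that substrate edges are directed with possibly asymmetric capacities and costs, so the crossing load at a substrate edge must be split into the two directed cuts $\cut_{G_R}(S)$ and $\cut_{G_R}(\VV \setminus S)$ and checked against (and charged to) the upward and downward substrate edges separately, as the paper does in its definition of $f$.
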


From a theoretical (worst-case) point of view there is almost no hope to obtain a substantially faster (exact) algorithm for tree substrates (see~Section~\ref{sec:hardness}).
A specific feature of the algorithm is its robustness:
It can be easily modified to also support additional constraints such as mapping exclusions on a per-node or per-edge basis~\cite{ccr19tw}.
Furthermore, as a side result, we show that any instance of \minVNEP{} on tree substrates can be translated in linear time into an instance of \minVNEP{} in which the substrate is a binary tree and only its leaves have non-zero capacities.
Hence, algorithms designed for such tree substrates, as, e.g., those by \citet{oktopus} and \citet{ccr15emb}, can also be applied on general tree substrates.

The algorithm of \cref{thm:simple-dp} also performs very well in practice.
In an extensive computational study we compare our algorithm to the classical exact algorithm based on integer programming as well as to the well-established ViNE heuristic~\cite{vnep}.  
The results are clear: our algorithm outperforms the integer program on all instances, consistently yielding average speedups exceeding a factor of $100\times$ and often even a factor of~$200\times$ for densely connected request graphs across small to medium-sized data center networks.
The running time of ViNE lies in the same order of magnitude as the one of our algorithm, but produces feasible solutions only for a quarter of the instances for which our algorithm found an \emph{optimal} solution.
To ensure reproducibility and facilitate follow-up work, we will provide 
our implementation to the research community as open source code, together with
all experimental artefacts.

As mentioned before, we complement our algorithm (\cref{thm:simple-dp}) by showing that in theory there is little hope for improving its running time substantially.
This can be derived from a simple NP-hardness result for the decision version of \minVNEP{}, which we will call \VNEP{}.
Here, we are given an instance of \minVNEP{} together with an integer $k$ and ask whether there is an embedding with costs at most $k$.
We show the following.

\begin{restatable}{theorem}{thmNPHardness}
	\label{obs:np-hard}
	\VNEP{} is \NP-hard, even if the subtrate $\SG$ consists of two nodes and the request~$\VG$ is edgeless, and $k=0$.
\end{restatable}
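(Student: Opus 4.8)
The plan is to reduce from the classical NP-complete \textsc{Partition} problem. The key observation is that the stated restrictions collapse \VNEP{} to a pure packing-feasibility question: when the request $\VG$ is edgeless there are no request edges to route, so an embedding is nothing more than a node map $m : \VV \to \SV$; and when $\SV$ consists of only two nodes $\{u_1,u_2\}$ and $k=0$, the nonnegativity of costs makes the cost requirement trivial, so the question degenerates to whether the $r$ request nodes can be split between $u_1$ and $u_2$ so that on each substrate node the accumulated demand does not exceed its capacity. For one-dimensional demands ($\tau=1$) this is exactly a two-bin packing question, i.e.\ \textsc{Partition}.

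Concretely, given a \textsc{Partition} instance consisting of positive integers $a_1,\dots,a_n$ with $\sum_i a_i = 2B$, I would build the following \VNEP{} instance. Let the substrate $\SG$ have node set $\SV = \{u_1,u_2\}$ (with either both arcs between them or none --- this is irrelevant, as no edge is ever routed), let $\tau = 1$, and set $\Scap(u_1) = \Scap(u_2) = B$ with $\Scost = 0$ on every element. Let the request $\VG$ be the edgeless graph on $r = n$ nodes $\{w_1,\dots,w_n\}$ with demand $\Vcap(w_i) = a_i$. Finally set $k = 0$. This construction is clearly computable in linear time.

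For correctness I would argue both directions. Since all substrate costs are $0$ and costs are nonnegative, every feasible embedding has cost exactly $0 \le k$; hence an embedding of cost at most $k$ exists iff a capacity-respecting node map exists. Any node map $m$ partitions $\{w_1,\dots,w_n\}$ into the preimages $S_1 = m^{-1}(u_1)$ and $S_2 = m^{-1}(u_2)$, and capacity-feasibility means $\sum_{w_i \in S_1} a_i \le B$ and $\sum_{w_i \in S_2} a_i \le B$. As these two sums add up to $2B$, both inequalities hold simultaneously iff both sums equal exactly $B$, i.e.\ iff $(S_1,S_2)$ is a valid partition. Thus the constructed instance is a \yes-instance of \VNEP{} iff the \textsc{Partition} instance is, establishing \NP-hardness.

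I do not expect a genuine obstacle: the reduction is short and its correctness is immediate. The only point meriting explicit care is spelling out that the $k=0$ constraint together with nonnegative costs turns cost-minimization into a pure feasibility check, and that the ``at most $B$ on each of two nodes whose loads sum to $2B$'' structure forces exact halves. This forcing is precisely what lets \textsc{Partition} do the work, and it is also why a single demand dimension ($\tau = 1$) already suffices for the claimed \NPhardness{}.
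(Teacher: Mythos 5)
Your reduction is correct and is essentially identical to the paper's own proof: both reduce from \textsc{Partition} by placing the integers as demands of an edgeless request, giving each of two substrate nodes capacity equal to half the total, zeroing all costs, and observing that the two ``at most half'' constraints on loads summing to the total force exact halves. The only difference is cosmetic notation (total $2B$ versus $B$).
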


An intermediate question from \cref{thm:simple-dp} is whether we can find another graph parameter $x$ of the request which is asymptotically smaller
than $r$ (number of vertices) but still admits an exact algorithm of running time $f(x) (s+r)^{O(1)}$, where $f$ is a computable function.
Assuming P$\ne$NP, such a running time cannot be achieved for any parameter $x$ which is asymptotically smaller than 
the number of edges in the request.
This is because the NP-hardness for the \VNEP{} holds even if the request contains no edges.
%
%
%
Also, \cref{obs:np-hard} rules out the existence of any approximation algorithm for the \minVNEP{}, even if the degree of the polynomial may depend on the substrate's number of nodes and the request's number of edges.

Our last contribution is a conditional lower bound on the running time of the \emph{Valid Mapping Problem} (\VMP{}), a relaxation of the \VNEP{}:
Analogously to the \VNEP{}, the question is whether there are node and edge mappings of the request onto the substrate such that the cost is below a given~$k \in \preals$, but we only enforce that the mapping of each individual virtual element does not exceed the capacities of the substrate (see \Cref{sec:hardness} for a formal definition).
This relaxation is used for instance by \citet{ccr19tw} to obtain an approximation algorithm for \VNEP{} in the resource augmentation framework.
Specifically, they present an algorithm for \VMP{} running in~$\poly(r) \cdot s^{\mathcal{O}{(\tw(\VG))}}$ time, where~$s$ and~$r$ are the number of nodes in the substrate and the request, respectively, and~$\tw(\VG)$ is the treewidth of the request~\cite{flum2006parameterized}.
By proving a W[1]-hardness result, we show that there is presumably no fixed-parameter algorithm for \VMP{} parameterized by the cost upper bound~$k$ combined with the number of nodes~$r$ in the request, and that the running time for \VMP{} obtained by \citet{ccr19tw} is asymptotically optimal:

\ifarxiv{}
\begin{restatable}{theorem}{thmETHBound}
	\label{thm:eth}
\VMP{} parameterized by~$k + r$ is W[1]-hard and, unless the Exponential Time Hypothesis (ETH) fails, there is no algorithm for \VMP{} running in~$f(r) \cdot s^{o(r)}$ time, where~$r$ and~$s$ are the number of nodes in the request and the substrate, respectively.
\end{restatable}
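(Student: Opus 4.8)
The plan is to give a polynomial-time parameterized reduction from \textsc{Multicolored Clique}: given an undirected graph $H=(V_H,E_H)$ whose vertex set is partitioned into $\kappa$ color classes $V_1,\dots,V_\kappa$, decide whether $H$ contains a clique with exactly one vertex per class. This problem is \Wone-hard when parameterized by $\kappa$, and, unless ETH fails, admits no $f(\kappa)\,|V_H|^{o(\kappa)}$-time algorithm. The reduction I have in mind produces a \VMP{} instance with request size $r=\kappa$, substrate size $s=|V_H|$, and cost bound $k=\binom{\kappa}{2}$; hence a (hypothetical) \VMP{} algorithm running in $f(r)\cdot s^{o(r)}$ time would solve \textsc{Multicolored Clique} in $f(\kappa)\,|V_H|^{o(\kappa)}$ time, refuting the ETH lower bound, while the very same reduction witnesses \Wone-hardness for the combined parameter $k+r=\mathcal{O}(\kappa^2)$.

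For the construction I would let the request $\VG$ be the (arbitrarily oriented) complete graph on $\kappa$ nodes, so $|\VV|=\kappa$ and $|\VE|=\binom{\kappa}{2}$. The substrate $\SG$ gets one node $x_u$ per vertex $u\in V_H$ and, for every $\{u,v\}\in E_H$, both arcs $(x_u,x_v)$ and $(x_v,x_u)$; thus $s=|V_H|$. I use $\kappa+1$ resource dimensions. The first $\kappa$ dimensions enforce the colors: request node $i$ demands one unit in dimension $i$ only, while $x_u$ with $u\in V_c$ offers capacity solely in dimension $c$, so validity forces request node $i$ onto some $x_u$ with $u\in V_i$ (exactly the kind of per-node mapping restriction the framework already supports). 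The last dimension governs routing: every substrate arc has unit capacity and unit cost there, every virtual edge has unit demand there, and all substrate-node costs are zero. With this choice the cost of any valid mapping equals the combined length of the $\binom{\kappa}{2}$ virtual-edge paths. Since distinct colors force distinct images, every such path has length at least one, so the cost is at least $\binom{\kappa}{2}$, with equality exactly when each virtual edge is routed along a single arc joining the images of its endpoints — which exists iff the two chosen vertices are adjacent in $H$. Consequently a valid mapping of cost at most $\binom{\kappa}{2}$ exists if and only if $H$ has a multicolored clique.

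Assembling these pieces yields \Wone-hardness for parameter $k+r$ and the claimed $f(r)\,s^{o(r)}$ lower bound under ETH; I would also note that $\tw(\VG)=\kappa-1=r-1$, so the bound matches the $\poly(r)\cdot s^{\mathcal{O}(\tw(\VG))}$ algorithm it is meant to complement. I expect the main obstacle to be the routing gadget and the correctness proof around it: one must ensure the cost model charges a virtual edge in proportion to its path length (so only length-one paths fit the budget) while exploiting that \VMP{} checks capacities \emph{per virtual element} — so that several virtual edges sharing an arc never break validity, and the clique constraint is enforced purely through the cumulative cost budget rather than through capacities. Carefully verifying both directions of the equivalence, in particular that no cheaper mapping can ``cheat'' by reusing arcs, by routing through longer detours, or by exploiting the orientation of the arcs, is the delicate part.
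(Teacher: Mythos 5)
Your reduction starts from the same source problem as the paper (\textsc{Multicolored Clique}) and has the same high-level shape, but there is a genuine gap in how you enforce the color classes. You use $\kappa+1$ resource dimensions so that request node $i$ can only be placed on substrate nodes of color class $i$. However, the paper defines \VMP{} to take the same input as the decision problem \VNEP{}, and that definition explicitly replaces the $\tau$-dimensional vectors of \minVNEP{} by \emph{scalars}: $\Vcap\colon \VG\to\preals$, $\Scap\colon \SG\to\preals$. So the instance your reduction produces is not an instance of the problem whose hardness is claimed; you have shown hardness of a multidimensional generalization (and one in which the number of dimensions grows with the parameter), which does not imply the stated result for scalar \VMP{}. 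The color-enforcement gadget is exactly the part that cannot be waved away, because with a single scalar capacity you can only enforce a one-sided constraint: a request node with demand $d$ fits on every substrate node of capacity at least $d$, so nodes can always ``escape upward'' into higher color classes.

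The paper's proof handles precisely this difficulty. It sets $\Vcap(v_i):=i+1$, $\Scap(w):=i+1$ and $\Scost(w):=i+1$ for $w\in V_i$, so that $v_i$ may legally land in any class $V_h$ with $h\ge i$, and then uses the cost budget $k'=\sum_{i=1}^{\kappa}(i+1)^2+|\VE|$ to rule such escapes out: mapping $v_i$ into a higher class strictly increases the node cost, and the only way to compensate is to route some request edges along paths of length zero (both endpoints on the same substrate node), which in turn forces the shared image into an even higher class. The resulting accounting (if $x_i$ edges leaving $v_i$ have length-zero paths, then $v_i$ sits in $V_h$ with $h\ge i+x_i$ and costs at least $(i+1)(i+1+x_i)$) shows the total cost always exceeds $k'$ unless every $v_i$ sits in $V_i$ and every request edge is routed on a single substrate arc, i.e., unless a multicolored clique exists. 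To repair your proposal you would either need to prove the theorem for the scalar problem via such an argument, or explicitly restate the theorem for a multidimensional variant of \VMP{} --- in which case it would no longer match the problem definition given in the paper nor cleanly complement the $\mathrm{poly}(r)\cdot s^{\mathcal{O}(\mathrm{tw}(\VG))}$ algorithm it is meant to show optimal.
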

\else{}
\begin{restatable}{theorem}{thmETHBound}
	\label{thm:eth}
\VMP{} parameterized by~$k + r$ is W[1]-hard and, unless the Exponential Time Hypothesis (ETH) fails, there is no algorithm for \VMP{} running in~$f(r) \cdot s^{o(r)}$ time, where~$r$ and~$s$ are the number of nodes in the request and the substrate, respectively.$^1$
\end{restatable}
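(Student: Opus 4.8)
The plan is to reduce from \textsc{Multicolored Clique}, where we are given a graph $G$ whose vertices are partitioned into color classes $V_1, \dots, V_h$ and must decide whether $G$ has a clique containing exactly one vertex of each class. This problem is \Wone-hard parameterized by the number $h$ of colors and, unless ETH fails, has no $f(h) \cdot N^{o(h)}$-time algorithm, where $N = |V(G)|$. I would engineer the reduction so that the resulting \VMP{} instance has request size $r = h$ and substrate size $s = N$; both lower bounds then transfer verbatim.

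For the construction, I would let the request $\VG$ be the directed clique on node set $\{1, \dots, h\}$ with edge set $\{(i,j) : i < j\}$, and use $\tau = h$ resource types. Request node $i$ receives the unit demand vector $e_i$ (a $1$ in coordinate $i$, zeros elsewhere) and every request edge unit demand. The substrate $\SG$ has one node $x_{i,a}$ for each vertex of $V_i$, with capacity $e_i$; by per-element feasibility this forces request node $i$ to be mapped onto a node representing a vertex of its own class $V_i$. For each edge $\{v_{i,a}, v_{j,b}\} \in E(G)$ with $i < j$ I add a single directed substrate edge $(x_{i,a}, x_{j,b})$. All substrate edges get unit cost and ample capacity, all node costs are zero, and the cost bound is $k = \binom{h}{2}$.

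Correctness rests on the budget forcing each request edge onto a length-one path. The images $x_{i,a}$ and $x_{j,b}$ of the endpoints of a request edge $(i,j)$ lie in distinct classes and are hence distinct substrate nodes, so each of the $\binom{h}{2}$ request-edge paths has length at least one and the total cost is at least $\binom{h}{2}$; equality holds precisely when every request edge is realized by a single substrate edge. Such an edge from $x_{i,a}$ to $x_{j,b}$ exists iff $v_{i,a} v_{j,b} \in E(G)$. Therefore a valid mapping of cost at most $\binom{h}{2}$ exists iff one can pick $v_{i,a_i} \in V_i$ for each $i$ that are pairwise adjacent, i.e.\ iff $G$ has a multicolored clique. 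I expect this to be the crux: I must argue that the relaxed per-element notion of feasibility underlying \VMP{} gives no way to realize a request edge more cheaply or to bypass the adjacency test, which is exactly what the tight budget together with the distinctness of the endpoints guarantees (the choice of strict versus non-strict inequality only shifts $k$ by one).

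Finally I would do the parameter bookkeeping. The reduction is polynomial, its parameter $k + r = \binom{h}{2} + h$ is a computable function of $h$, and we have $r = h$ and $s = N$. Hence \VMP{} parameterized by $k + r$ inherits \Wone-hardness from \textsc{Multicolored Clique}, and any \VMP{} algorithm running in $f(r) \cdot s^{o(r)}$ time would solve \textsc{Multicolored Clique} in $f(h) \cdot N^{o(h)}$ time, contradicting ETH. Since the request is a clique, $\tw(\VG) = r - 1$, so this reduction simultaneously certifies that the $s^{\mathcal{O}(\tw(\VG))}$ upper bound is asymptotically optimal.
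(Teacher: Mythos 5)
Your reduction has the right source problem and the right overall shape --- the paper also reduces from \textsc{Multicolored Clique} and also uses a tight cost budget to force every request edge onto a length-one substrate edge, which is exactly the adjacency test. The parameter bookkeeping and the ETH transfer at the end are fine. The problem is the mechanism you use to force request node $i$ into color class $V_i$: you give node $i$ the demand vector $e_i$ and the nodes of $V_i$ the capacity vector $e_i$, i.e.\ you work with $\tau = h$ resource types. But \VMP{} as defined in the paper takes the same input as \VNEP{}, whose demands and capacities are \emph{scalars} in $\preals$, not $\tau$-dimensional vectors. Hardness of the multi-dimensional generalization does not imply hardness of the scalar problem, so as written your argument proves a formally weaker statement than the theorem. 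Moreover, the fix is not cosmetic: with scalar demands, $\le$ is a total order, so you cannot arrange $\Vcap(v_i) \le \Scap(w)$ to hold exactly when $w \in V_i$; any graded choice of demands lets node $i$ validly map into every class above $i$ as well, and it also permits two request nodes to share a substrate node, turning some request edges into length-zero paths that undercut your budget argument.

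The paper closes exactly this gap with a cost argument rather than a capacity argument. It sets $\Vcap(v_i) = i+1$, gives nodes of $V_i$ capacity and cost $i+1$, and sets the budget to $k' = \sum_{i=1}^{k}(i+1)^2 + |\VE|$. Mapping $v_i$ into a higher class $V_h$ is then valid but costs $(i+1)(h+1) > (i+1)^2$, and the proof shows that the only way to recoup this surplus --- mapping some request edges onto length-zero paths by colliding request nodes on one substrate node --- forces $v_i$ into class $V_{h}$ with $h \ge i + x_i$ (where $x_i$ counts such collapsed outgoing edges), so the extra node cost $(i+1)x_i$ strictly exceeds the at most $x_i$ saved on edges. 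You would need to add this (or an equivalent scalar gadget) to make your proof establish the theorem as stated; with it, your construction essentially becomes the paper's.
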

\footnotetext[1]{The proof of \cref{thm:eth} can be found in the full version of this paper, available on arXiv: \url{https://arxiv.org/abs/XXXX.XXXXX}}
\setcounter{footnote}{1}
\fi{}

\subsection{Related Work and Novelty}
\label{sec:rel-work-and-novelty}


The \textsc{Virtual Network Embedding Problem} has received tremendous attention by the networking community over the last 15~years: already by 2013 more than 80 algorithms had been published in the literature for its various flavors~\cite{vnep-survey}. The particular $\minVNEP$ objective, on which we focus in this paper, has received by far the most attention: there is extensive work on heuristics~\cite{zhu2006algorithms, Lischka, vnep, meng2010improving} as well as exact algorithms based on mixed-integer programs~\cite{vnep, infuhr2011introducing} for~$\minVNEP$.
Notably, however, there is no work so far on (nontrivial) combinatorial \emph{exact} algorithms for~$\minVNEP$.

\paragraph{Closely Related Applications.}
Various applications of the \textsc{Virtual Network Embedding Problem} have spawned independent research with dozens of proposed algorithms.
Among the most prominent ones are the embedding problems pertaining to Virtual Clusters (VCEP)~\cite{oktopus}, to Service Function Chains (SFCEP)~\cite{nfv-sfc-survey}, to Virtual Data Centers (VDCEP)~\cite{vdc-Zhang2014Infocom}, and to the Internet of Things~\cite{SHN20}.
In short, the VCEP studies the embedding of tree requests onto data center topologies, the SFCEP studies the embedding of sparse requests representing (virtualized) network functions, and the VDCEP focuses on the embedding of arbitrary requests across geographically distributed data centers in wide-area networks.
While at times introducing additional constraints, the Virtual Network Embedding Problem lies at the heart of these problems as well.

\paragraph{Applications of $\minVNEP$.}
Various algorithms rely on solving the $\minVNEP$ as a subroutine.
The application domains include:
\begin{description}
	\item[Offline Objectives.] The offline setting of the \textsc{Virtual Network Embedding Problem} 
	over several requests under cost objectives can be solved by~$\minVNEP$ by considering the union of the requests.
	Further, there are exponential-time (parameterized) approximations for the offline setting in the resource augmentation framework that use algorithms for the cost minimization variant of~$\VMP$ or~$\minVNEP$ as a subroutine~\cite{,rostSchmidVNEP_RR_IFIP_18,ccr19tw,ifip20vnep}.
	\item[Competitive online optimization.] \mbox{\citet{tcs12vnet}} showed how to construct competitive online algorithms for the profit variant of the VNEP from any exact algorithm for the~$\minVNEP$.
	\item[Congestion minimization.] \citet{bansal2011minimum} studied the problem of minimizing the maximal load (while not enforcing capacities).
	They obtained competitive online and offline approximation algorithms that solve~$\minVNEP$ as a subroutine.
\end{description}

Given our fixed-parameter algorithm for the special case of tree substrates (see Theorem~\ref{thm:simple-dp}), novel parameterized algorithms for all of the above highlighted settings and objectives can be obtained.

\paragraph{(Parameterized) Complexity.}
Despite the popularity of the $\VNEP$, until recently only little was known about its 
fine-grained computational complexity. 
\citet{ton20hard} made the first step towards understanding the (parameterized) complexity of the $\VNEP$, 
showing that any optimization variant of the $\VNEP$ (where \minVNEP{} is one of them) is inapproximable in polynomial time, 
unless $\compPeqNP$, even when the request graph is planar and the substrate is acyclic. 
\citet{ccr19tw} gave the first approximation algorithm for the offline profit objective for requests of constant treewidth in the resource augmentation framework, also carrying over to the cost setting~\cite{ifip20vnep}. 

In contrast to the above works, we focus on efficient and exact fixed-parameter algorithms while restricting the substrate to be a tree.
Tree substrates are most predominantly encountered in data centers, e.g., in the form of fat trees~\cite{fattrees}.
Fat trees or similar leaf-spine architectures are widely studied in the literature and used in industry~\cite{vl2,oktopus}.
Additionally, by employing substrate transformations, such as computing Gomory-Hu trees~\cite{soualah-vnep-via-gomory-hu-noms-2016}, non-tree substrates may be transformed to trees, albeit optimality guarantees cannot be preserved.
\citet{bansal2011minimum} designed specific algorithms for tree substrates of bounded depth, where the objective is to minimize congestion. 
For the parameterization of the request size---the main focus of this paper---no results are known thus far.

\paragraph{Small Request Graphs.} 
The application of our main result (cf.~\cref{thm:simple-dp}) yields algorithms of practical significance only when the number of request nodes is small and in our computational study we restrict our attention to request graphs on less than 12 nodes. While this may be considered to be an unreasonably small number of nodes, many existing works on the VNEP~\cite{vnep, vnep-survey} and its applications in data centers~\cite{Yuan-VNEP-DC-2019-ACCESS, Sun-SFC-DC-2019-ACCESS} consider requests of such size.


\subsection{Preliminaries}
For~$n \in \N$ let~$[n] := \{1, \dots, n\}$.
For two vectors~$a = (a_i)_{i=1}^{\tau}, b = (b_i)_{i=1}^{\tau}$ we write~$a \le b$ if~$a_i \le b_i$ for all~$i \in [\tau]$ and~$a \not\le b$ otherwise.

Let~$G = (V, E)$ be a directed graph.
For a node subset~$V' \subseteq V$, we denote by~$G[V']$ the \emph{subgraph of~$G$ induced by~$V'$}, and by~$V(G[V'])$ and~$E(G[V'])$ the node set and the edge set of~$G[V']$, respectively.
For a node~$v \in V$ we denote by~$N_G^+(v)$, respectively~$N_G^-(v)$, the set of nodes that are connected by an edge pointing away from, respectively towards~$v$.
By~$N_G(v) := N_G^+(v) \cup N_G^-(v)$ we denote the (combined) neighborhood of~$v$.
The degree~$\deg_G(v)$ of~$v$ is the number of nodes in the neighborhood of~$v$.
The \emph{underlying undirected graph} of a directed graph~$G$ is the undirected graph without multiedges on the same node set and it has an edge~$\{u, v\}$ for every directed edge~$(u, v)$ in~$G$.
We say that a directed graph is a \emph{tree} if its underlying undirected graph is a tree.

%
%
Given an instance of either \minVNEP{}, its decision variant \VNEP{} or the \VMP{}, we say that a pair of mappings~$(\mapV, \mapE)$ is a \emph{valid mapping} if the edge mappings are valid, and capacities are respected per each individual virtual element, that is,
\begin{enumerate}
		\item
				for every edge~$(u, v) \in E_r$, $\mapE(u, v)$ is a path from~$\mapV(u)$ to~$\mapV(v)$ in~$\SG$,
		\item
				$\Vcap(w) \le \Scap(\mapV(w))$ for every~$w \in \VV$, and
		\item 	$\Vcap(e) \le \Scap(e_S)$ for all virtual edges~$e \in \VE$ and their mappings~$e_S \in \mapE(e)$.
\end{enumerate}
We call the mapping \emph{feasible} if additionally all demands of the request nodes and edges can be fulfilled by the capacities of the substrate nodes and edges they are mapped onto, that is,
\begin{align*}
\ifarxiv{}
	&\sum_{w : \mapV(w)=v} \Vcap(w) \le \Scap(v) &&\text{ for } v \in \SV, &\text{ and }\\
	&\sum_{e_R:e_S \in \mapE(e_R)} \Vcap(e_R) \le \Scap(e_S) &&\text{ for } e_S \in \SE.&
\else{}
	&\sum\nolimits_{w : \mapV(w)=v} \Vcap(w) \le \Scap(v) &&\text{ for } v \in \SV, &\text{ and }\\
	&\sum\nolimits_{e_R:e_S \in \mapE(e_R)} \Vcap(e_R) \le \Scap(e_S) &&\text{ for } e_S \in \SE.&
\fi{}
\end{align*}

The \emph{cost} of a mapping~$(\mapV, \mapE)$ is defined as the sum of the cost of mapping all nodes plus the sum of the costs mapping all edges.
Note that the latter consists of the cost of every single edge of the path onto which a request edge is mapped.
Formally, the cost is
\[
	\sum_{v \in V(G_R)} \Vcap(v)^\top  \Scost(\mapV(v)) 
	+ \sum_{e \in E(G_R)} \Big(\sum_{e' \in \mapE(e)} \Vcap(e)^\top \Scost(e') \Big).
\]

We can now formally define~$\minVNEP$:

\begin{definition}[Min.~Cost Virtual Network Embedding (\minVNEP)]
	\label{def:vnep}~\\[-22pt]
\begin{description}[labelwidth=1cm, leftmargin=1.1cm]
	\item[Input:]	A directed graph~$\SG = (\SV, \SE)$ on~$s$ nodes, called \emph{substrate}, and a directed graph $\VG = (\VV, \VE)$ on~$r$ nodes, called \emph{request}, with demands~$\Vcap: \VG \to \preals^\tau$, capacities~$\Scap: \SG \to \preals^\tau$, and costs~$\Scost: \SG \to \preals^\tau$.
	\item[Task:] Find a feasible mapping of minimum cost.
\end{description}	
\end{definition}

In the decision variant, $\VNEP$, we are additionally given a nonnegative~$k\in\preals$ with an instance of $\minVNEP$ and decide wheth\-er there is a feasible mapping with cost at most~$k$.
Formally, it is defined as follows (note that in this definition we replace the~\mbox{$\tau$-dimensional} vectors by scalars):
\begin{definition}[Virtual Network Embedding Problem (\VNEP)]
	~\\[-12pt]
\begin{description}[labelwidth=1cm, leftmargin=1.1cm]
	\item[Input:]	A directed graph~$\SG = (\SV, \SE)$ on~$s$ nodes, called \emph{substrate}, and a directed graph $\VG = (\VV, \VE)$ on~$r$ nodes, called \emph{request}, with demands~$\Vcap: \VG \to \preals$, capacities~$\Scap: \SG \to \preals$, costs~$\Scost: \SG \to \preals$, and a cost upper bound~$k \in \preals$.
	\item[Question:] Is there a feasible mapping of cost at most~$k$?
\end{description}	
\end{definition}
The Valid Mapping Problem ($\VMP$) takes the same input as the $\VNEP$ and asks whether there is a \emph{valid} (but not necessarily feasible) mapping with cost at most~$k$.

We assume familiarity with standard notions regarding algorithms and complexity, but briefly review notions regarding parameterized complexity analysis.
Let~$\Sigma$ denote a finite alphabet.
A parameterized problem~$L\subseteq \{(x,k)\in \Sigma^*\times \N_0\}$ is a subset of all instances~$(x,k)$ from~$\Sigma^*\times \N_0$,
where~$k$ denotes the parameter.
A parameterized problem~$L$ is 
		\emph{fixed-parameter tractable} (or contained in the class \FPT) 
		 if there is an algorithm that decides every instance~$(x,k)$ for~$L$ in~$f(k)\cdot |x|^{O(1)}$ time, and it is
		 \emph{contained in the class~\XP{}} 
		 if there is an algorithm that decides every instance~$(x,k)$ for~$L$ in~$|x|^{f(k)}$ time,
where~$f$ is any computable function only depending on the parameter and $|x|$ is the size of~$x$.
For two parameterized problems~$L,L'$,
an instance~$(x,k)\in \Sigma^*\times\N_0$ of~$L$ is equivalent to an instance~$(x',k')\in \Sigma^*\times\N_0$ for~$L'$ if~$(x,k)\in L\!\!\iff\!\! (x',k')\in L'$.
A problem~$L$ is \Wone-hard if for every problem~$L'\in \Wone$ there is an algorithm that maps any instance~$(x,k)$
in~$f(k)\cdot |x|^{O(1)}$ time to an equivalent instance~$(x',k')$ with~$k'=g(k)$ for some computable functions~$f,g$.
It holds true that~$\FPT \subseteq \Wone \subseteq \XP$.
It is believed that~$\FPT\neq \Wone$,
and that hence no~\Wone-hard problem is believed to be fixed-parameter tractable.
Another prominent assumption in the literature is the
\emph{Exponential Time Hypothesis} (ETH) which states 
that there is no $2^{o(n)}$-time algorithm for \textsc{$3$-SAT}, where~$n$ is the number of variables~\cite{IPZ01}.

\section{Hardness}
\label{sec:hardness}
\appendixsection{sec:hardness}

In this section, we show that there is no \XP-algorithm to solve optimally, or approximate the costs of, \minVNEP{} for any combined parameter consisting of
(i) any parameter of the substrate and
(ii)~the number of edges in the request, unless P$=$NP.
In related work, we can find several special cases in which \minVNEP{} remains \NP-hard~\cite{amaldi2016computational,bansal2011minimum}.
However, from the parameterized point of view the following simple polynomial-time many-one reduction from \textsc{Partition} to \VNEP{} (the decision version of \minVNEP{}) excludes many potential parameters towards an \FPT- or even an \XP-algorithm.
\thmNPHardness*
\begin{proof}
We reduce from the \NP-hard \textsc{Partition} problem, where we are given 
a multiset $S$ of positive integers
and ask whether
there is a $S'\subseteq S$ such that $\sum_{x \in S'} x = \sum_{x \in S \setminus S'} x$ \cite{karp1972reducibility}.

Let $S$ be such a multiset of positive integers and assume without loss of generality
that $B := \sum_{x \in S} x$ is even.
We construct an instance $I=(G_S,G_R,d_R,d_S,c_S,k=0)$ of \VNEP{} such that
				$G_S := (\{a,b\},\{(a, b), (b, a)\})$,
				$G_R := (S, \emptyset)$ and
				$d_R(x) := x$ for all $x \in S$,
				$c_S(a) := c_S(b) :=  c_S(a, b) := c_S(b, a) := 0$,
				$d_S(a,b) := d_S(b, a) := 0$, and 
				$d_S(a) := d_S(b) := \frac{B}{2}$.
Clearly, this is doable in polynomial time.

We now show that there exists a solution~$S' \subseteq S$ if and only if there exists a feasible mapping~$(\mapV, \mapE)$ for~$I$ of cost~$0$.

($\Rightarrow$): 
Let $S'\subseteq S$ such that $\sum_{x \in S'} x = \sum_{x \in S \setminus S'} x = \frac{B}{2}$.
Then, we set $\mapV(x) = a$, for all $x \in S'$, and
$\mapV(x) = b$, for all $x \in S\setminus S'$.
Observe that $(\mapV,\mapE)$ is a feasible mapping of cost $0$.

($\Leftarrow$):
Let $(\mapV, \mapE)$ be a feasible mapping for~$I$ of cost~$0$.
Let $S'\subseteq S$ be the set of nodes of $G_R$ which are mapped to $a$.
Hence, $\sum_{x \in S'} x \leq d_S(a) = \frac{B}{2}$ 
and $\sum_{x \in S \setminus S'} x \leq d_S(b) = \frac{B}{2}$.
Since $\sum_{x \in S} x = B$, 
we have $\sum_{x \in S'} x = \sum_{x \in S \setminus S'} x$.
\end{proof}

Since \VNEP{} is \NP-hard even if the substrate is of constant size, we can conclude that there is no \XP-algorithm for \VNEP{} parameterized by any reasonable parameter of the substrate, unless P$=$NP.
Otherwise, this would imply a polynomial-time algorithm for the NP-hard \textsc{Partition} problem. 
Furthermore, since \VNEP{} is \NP-hard even if the substrate graph 
is of constant size and the request is edgeless,
we can exclude the existence of an \XP-algorithm for \VNEP{} 
parameterized by a combination of any `reasonable' parameter for the substrate 
and the number of edges in the request.
Note that this excludes among others the parameters \emph{vertex cover number}, \emph{feedback edge number}, \emph{treewidth}, and \emph{maximum degree} of the request,
because these parameters are upper-bounded by the number of edges.
Moreover, since $k=0$ in \cref{obs:np-hard},
any approximation algorithm\footnote{That is, an algorithm returning a feasible solution and giving provable guarantees on the distance of the returned solution to the optimal one.} for~$\minVNEP$ would be able to solve \textsc{Partition}.
Altogether, we have the following.
\begin{corollary}
		\label{cor:hardness}
		Let $f \colon \mathcal G \rightarrow \mathbb N$ be a computable function, where $\mathcal G$ is the set of directed graphs.
		Unless~P$=$NP, 
		\begin{enumerate}
				\item there is no $|I|^{h\left(f(G_S)+|E_R|\right)}$-time algorithm for \VNEP{}, and
				\item there is no $|I|^{h\left(f(G_S)+|E_R|\right)}$-time approximation algorithm for $\minVNEP$,
		\end{enumerate}
		where $|I|$ is the size of the instance, 
		$G_S$ is the substrate,
		$|E_R|$ is the number of edges in the request, 
		and~$h \colon \mathbb N \rightarrow \mathbb N$ is a computable function.
\end{corollary}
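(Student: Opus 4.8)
The plan is to derive both statements by contradiction from \cref{obs:np-hard}, exploiting the fact that the hardness reduction produces \VNEP{} instances in which both $f(G_S)$ and $|E_R|$ are constant. First I would observe that in the reduction from \textsc{Partition} underlying \cref{obs:np-hard}, every constructed instance $I$ shares one and the same fixed two-node substrate $G_S = (\{a,b\},\{(a,b),(b,a)\})$ together with an edgeless request. Hence $f(G_S)$ takes a single fixed value (as $f$ is evaluated on the very same graph across all instances), and $|E_R| = 0$, so that $f(G_S) + |E_R|$ equals a constant $c$ that does not depend on the \textsc{Partition} instance. Consequently $h(c)$ is also a constant, and the hypothesised running time $|I|^{h(f(G_S)+|E_R|)} = |I|^{h(c)}$ is polynomial in $|I|$.

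For the first statement I would assume towards a contradiction that some algorithm decides \VNEP{} in time $|I|^{h(f(G_S)+|E_R|)}$. Applied to the reduction instances it then runs in $|I|^{O(1)}$ time, and since by \cref{obs:np-hard} these instances are yes-instances exactly when the corresponding \textsc{Partition} instance is a yes-instance, this would solve the \NP-hard \textsc{Partition} problem in polynomial time, forcing $\compPeqNP$.

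The second statement requires a small additional observation, and this is the step I expect to be the most delicate: converting an approximation guarantee into a decision procedure. The key point is that in the reduction of \cref{obs:np-hard} every substrate cost is set to $0$, so \emph{every} feasible mapping has cost exactly $0$; in particular the optimum is $0$ whenever a feasible mapping exists and no solution exists otherwise. Thus any approximation algorithm for $\minVNEP{}$---which by the stated definition must return a feasible solution together with a provable bound on its distance from the optimum---necessarily outputs a feasible (cost-$0$) mapping precisely on the feasible instances and fails to do so on the infeasible ones. Running such an algorithm on the reduction instances therefore distinguishes yes- from no-instances of \textsc{Partition} within the polynomial running time $|I|^{h(c)}$, once again yielding $\compPeqNP$ and completing the argument. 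The only subtlety to handle carefully is the behaviour of the approximation algorithm on infeasible instances; here it suffices to note that the existence of a returned feasible solution is itself equivalent to feasibility of $I$, so that no assumption on the multiplicative or additive nature of the guarantee is needed.
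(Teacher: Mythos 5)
Your proposal is correct and follows essentially the same route as the paper: the paper also argues that the reduction behind \cref{obs:np-hard} fixes a two-node substrate and an edgeless request, so the exponent $h(f(G_S)+|E_R|)$ is constant and the hypothesised algorithm would solve \textsc{Partition} in polynomial time, and it likewise uses the fact that $k=0$ (all costs zero) to conclude that any approximation algorithm deciding feasibility would do the same. Your explicit handling of the approximation case matches the paper's (footnoted) definition of an approximation algorithm, so no gap remains.
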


Given the hardness results of \cref{cor:hardness}, we see two ways to develop efficient exact algorithms:
\begin{enumerate}
		\item Restrict the input instances to special cases which are relevant in practice---this is what we do in \Cref{sec:robust-dp}. 
		\item Study a reasonable relaxation of the problem---such as the (NP-hard) \VMP.
\end{enumerate}
Towards (2), \citet{ccr19tw} studied and presented an algorithm for the \VMP{} running in~$\poly(r) \cdot s^{\mathcal O(\tw(\VG))}$ time, where~$\tw(\VG)$ is the treewidth of the request.
They then used this algorithm as a subroutine in an approximation algorithm for an offline variant of the Virtual Network Embedding Problem (see Section~\ref{sec:rel-work-and-novelty}).

With \cref{thm:eth}, we show that the algorithm of \citet{ccr19tw} is asymptotically optimal, unless the Exponential Time Hypothesis fails.%
\ifarxiv{}\else{} The proof of \cref{thm:eth} can be found in the full version.\fi{}
\appendixproof{thm:eth}{%
	For the sake of completeness, we explicitly define the \textsc{Valid Mapping Problem} 
	and show afterwards the formal proof of \cref{thm:eth}.

\begin{definition}[\textsc{Valid Mapping Problem} (\VMP)]
	\label{def:vmp}~\\[-12pt]
\begin{description}[labelwidth=1cm, leftmargin=1.1cm]
	\item[Input:]	
			A directed graph~$\SG = (\SV, \SE)$ called the \emph{substrate graph},
			a directed graph~$\VG = (\VV, \VE)$ called the \emph{request graph}, with demands $\Vcap \colon \VG \to \preals$, a capacities~$\Scap \colon \SG \to \preals$, a costs~$\Scost \colon \SG \to \preals$, and a cost upper-bound~$k \in \preals$.
	\item[Question:]
			Are there mappings~$\mapV \colon \VV \to \SV$ and $\mapE \colon \VE \to \simplePaths$
	such that
	\begin{enumerate}[label=(\arabic*)]
		\item $\Vcap(v) \le \Scap(\mapV(v))$ holds for all~$v \in \VV$, 
		\item for every edge~$(u, v) = e \in \VE$, it holds that $\mapE(e)$ is a path from $\mapV(u)$ to~$\mapV(v)$ and for every edge~$e' \in E(\mapE(e))$, it holds that $\Vcap(e) \le \Scap(e')$, and
\ifarxiv{}
		\item the overall mapping cost 
				\begin{align*}
						\sum_{v \in \VV} \Scost(\mapV(v)) \cdot \Vcap(v) + \sum_{e \in \VE} \big(\sum_{e' \in E(\mapE(e))} \Scost(e') \big) \cdot \Vcap(e)
				\end{align*}
						is at most~$k$?
\else{}
		\item the overall mapping cost $\sum_{v \in \VV} \Scost(\mapV(v)) \cdot \Vcap(v) + \sum_{e \in \VE} \big(\sum_{e' \in E(\mapE(e))} \Scost(e') \big) \cdot \Vcap(e)$ is at most~$k$?
\fi{}
	\end{enumerate}
\end{description}
\end{definition}
\thmETHBound*

\begin{proof}
	We provide a polynomial-time many-one reduction from the W[1]-hard \cite{fellows2009parameterized} \textsc{Multicolored Clique} problem:
	Given an integer~$k$ and a~$k$-partite undirected graph~$G = (V_1, V_2, \dots, V_k, E)$, \textsc{Multicolored Clique} asks whether~$G$ contains a clique on~$k$ nodes.
	Assuming ETH, there is no~$f(k) \cdot |V(G)|^{o(k)}$-time algorithm for \textsc{Multicolored Clique} \cite{CHKX06}.

	We construct an instance of~\VMP{} as follows:
	We set~$\SV := V(G)$, and for every undirected edge~$\{w_i, w_j\}$, where~$i<j$ for~$w_i \in V_i$ and~$w_j \in V_j$, we add a directed edge~$(w_i, w_j)$ to the edge set~$\SE$ of the substrate graph.
	Our request graph~$\VG := (\{v_1, v_2, \dots, v_k\},\allowbreak \{ (v_i, v_j) \mid 1 \le i < j \le k \})$ is a directed clique.
	For all~$e \in \VE$, we set~$\Vcap(e) := 1$.
	For~$1 \le i \le k$, we set~$\Vcap(v_i) := i+1$.
	For all~$e \in \SE$, we set~$\Scap(e) := 1$.
	For~$1 \le i \le k$ and for~$w \in V_i$, we set~$\Scap(w) := i+1$.
	The cost~$\Scost$ is~$1$ for every edge in~$\SE$ and~$\Scost$ is~$i+1$ for every node in~$\SV$.
	Finally, we set the cost upper bound to~$k' := \sum_{i=1}^k (i+1)^2 + |\VE|$.
	Note that $k' + r \in O(k^3)$.

	We now show that~$(G, k)$ is a yes-instance of \textsc{Multicolored Clique} if and only if the instance of~\VMP{} above is a yes-instance.

	($\Rightarrow$): 
	Let~$G'$ be the multicolored clique in~$G$.
	Then we construct the mapping~$\map = (\mapV, \mapE)$ such that
	\begin{enumerate}[label=(\arabic*)]
		\item for every node~$v_i \in \VV$, we set~$\mapV(v_i)$ to be the (unique) node in~$V(G') \cap V_i$,
		\item for every edge~$(v_i, v_j) \in \VE$, we set~$\mapE(v_i, v_j)$ to be the set of directed edges~$(u_i, u_j) \in \SE$ with~$u_i \in V(G') \cap V_i$ and~$u_j \in V(G') \cap V_j$.
	\end{enumerate}
	The mapping~$\map$ is valid:
	The demands of a node~$v_i$ are equal to the capacity and costs of~$\mapV(v_i)$.
	The resulting costs are~$(i+1)^2$ for each~$v_i \in \VV$.
	For every edge in~$\VE$ there is a path of length one.
	Thus the cost incurred by the mapping is exactly~$k'$.

	($\Leftarrow$): 
	Assume towards a contradiction that there is no clique of size~$k$ in~$G$, but there exists a valid mapping~$\map$ with the costs being at most~$k'$.
	Observe first that, due to the demands and capacities, the nodes~$\VV$ must incur cost of at least~$\sum_{i=1}^k (i+1)^2$.

	Suppose the cost of the nodes are exactly~$\sum_{i=1}^k (i+1)^2$, that is, node~$v_i$ is mapped onto a node in~$V_i$.
	Then the cost of the mapping of the request edges~$\VE$ must be greater than~$|\VE|$ since
	\begin{enumerate}[label=(\arabic*)]
		\item every edge in~$\VE$ is mapped onto a path of length~$\ell\ge1$ 
		\item at least one edge in~$\VE$ is mapped onto a path of length at least two, as~$G$ does not contain a clique on~$k$ nodes.
	\end{enumerate}
	This is a contradiction to the costs of~$\map$ being at most~$k'$.

	So suppose that the cost of the nodes are greater than~$\sum_{i=1}^k (i+1)^2$.
	Since the overall cost of the mapping is at most~$k'$, there must be edges in~$\VE$ that are mapped onto paths of length zero.
	Let~$v_i \in \VV$, and let~$x_i$ be the number of edges leaving~$v_i$ that are mapped onto paths of length zero.
	Then~$v_i$ is mapped onto a node in~$V_h$, where~$h \ge i + x_i$.
	So the mapping of~$v_i$ incurs cost of at least~$(i+1)(i+1+x_i)$, and the mapping of the edges leaving~$v_i$ incur cost of at least~$|N^+(v_i)| - x_i$.
	The overall cost of the mapping~$\map$ thus is~$\sum_{i=1}^k (i+1)^2 + |E_r| + \sum_{i=1}^k i \cdot x_i$, where the last sum accumulates the cost of the edges that are mapped onto a path of length zero.
	This again is a contradiction to the costs of~$\map$ being at most~$k'$.

	Assume now that there is an algorithm for \VMP{} running in~$f(r) \cdot |\SV|^{o(r)}$ time.
	Then we can solve an instance~$(G, k)$ of \textsc{Multicolored Clique} as follows.
	Construct the corresponding \VMP{}-instance in~$n^{\mathcal O(1)}$ time, and solve it in~$f(k) \cdot n^{o(k)}$ time.
	An algorithm for \textsc{Multicolored Clique} with this running time contradicts the ETH.
\end{proof}}

\section{Efficient VNEP algorithm for small requests on trees}
\label{sec:robust-dp}
\appendixsection{sec:robust-dp}

We focus on the special case of \VNEP{} where the substrate is a tree and 
show that it is fixed-parameter tractable 
when parameterized by the number of nodes in the request.
Thus, the main objective of this section is to show the following.
\thmSimpleDP*
Recall that \VNEP{} (and thus \minVNEP{}) on tree substrates is \NP-hard (\cref{obs:np-hard}),
even if the request contains no edges. %
Thus, we cannot improve on \cref{thm:simple-dp} by replacing the 
parameter \emph{number of nodes in the request} with a smaller parameter like \emph{vertex cover number},
\emph{feedback edge number}, or \emph{maximum degree}, unless P$=$NP.

Our algorithm for \cref{thm:simple-dp} works in three steps (see \cref{alg:dynprogram} for a pseudocode illustration):
\begin{enumerate}
	\item Introduce additional leaves to the substrate to ensure that all non-leaves have capacity zero (\cref{lem:map-to-leafs}, method \texttt{Leaf} in the pseudocode).
	\item Split nodes in the substrate with more than two children such that we obtain a binary tree (\cref{lem:bin-tree}, method \texttt{Split}).
	\item Use dynamic programming to solve \minVNEP{} with the substrate being restricted to such trees (method \texttt{GetEntry}).
\end{enumerate}

{
\begin{algorithm}
\DontPrintSemicolon
\SetKwInOut{Input}{Input}\SetKwInOut{Output}{Output}
\SetKwProg{Fn}{Function}{:}{\KwRet}
\SetKwProg{Main}{Main Procedure}{:}{\KwRet}
\SetKwFunction{PostOrderTraversal}{PostOrderTraversal}{}{}
\SetKwFunction{ConstructSolution}{InducedMapping}{}{}
\SetKwFunction{LocallyValid}{InducedMappingLocallyValid}{}{}
\SetKwFunction{PrecomputeShortestValidPaths}{PrecomputeShortestValidPaths}{}{}
\SetKwFunction{MappingCost}{InducedCost}{}{}
\SetKwFunction{Split}{Split}{}{}
\SetKwFunction{ComputeEntry}{GetEntry}{}{}
\SetKwFunction{MakeLeaf}{Leaf}{}{}
\Fn(\tcp*[f]{see \cref{lem:map-to-leafs}}){\MakeLeaf{$v \in \SV$}}{
	Add node~$v'$ to~$\SG$ as a child of~$v$.\\
	$\Scap(v') \gets \Scap(v)$, $\Scost(v') \gets \Scost(v)$.\\
	$\Scap(v) \gets 0$, $\Scost(v) \gets \infty$.\\
	$\Scap(v, v'), \Scap(v', v) \gets \infty$, $\Scost(v, v'), \Scost(v', v) \gets 0$.\\
}
\Fn(\tcp*[f]{see \cref{lem:bin-tree}}){\Split{$v \in \SV$}}{
	Let~$u_1, \dots, u_t$ be the children of~$v$, let~$s = \lfloor t/2 \rfloor$.\\
	Add nodes~$v_\ell$, $v_r$ to~$\SG$, with~$\Scap(v_\ell),\Scap(v_r) \gets 0$ and~$\Scost(v_\ell),\Scost(v_r) \gets \infty$.\\
	Make~$v_\ell$ parent of~$u_1,\!\makebox[1em][c]{.\hfil.\hfil.}, u_s$ (keep capacities and costs).\\
	Make~$v_r$ parent of~$u_{s+1},\!\makebox[1em][c]{.\hfil.\hfil.}, u_t$ (keep capacities and costs).\\
	Make~$v$ parent of~$v_\ell$, $v_r$.\\
	$\Scap(v, v_r), \Scap(v_r, v), \Scap(v, v_\ell), \Scap(v_\ell, v) \gets \infty$.\\
	$\Scost(v, v_r), \Scost(v_r, v), \Scost(v, v_\ell), \Scost(v_\ell, v) \gets 0$.\\
	\lIf{$v_\ell$ has more than 2 children}{call \Split{$v_\ell$}}
	\lIf{$v_r$ has more than 2 children}{call \Split{$v_r$}}
}
\Fn{\ComputeEntry{$R \subseteq \VV, v \in \SV$}}{
	\tcp{returns the of entry in~$D$, or computes it}
	\lIf{$D[R, v]$ was already computed}{\Return $D[R, v]$.}
	\uIf{$v$ is a leaf}{
		$D[R, v] \gets \begin{cases}
			\infty, \;\;\;\;\;\;\;\;\;\; \text{if } \sum_{u \in R} \Vcap(u) \not \leq \Scap(v),\\
			\sum_{u \in R} \Vcap(u)^\top\Scost(v),  \;\;\;\text{ otherwise.}
		\end{cases}$
	}
	\lElseIf{$v$ has one child~$u$}{
		$D[R, v] \gets f(v, u, R)$.
	}
	\ElseIf{$v$ has two children~$u$ and~$w$}{
		$D[R,v] \gets \min\limits_{A \uplus B = R} f(v,w,A)+f(v,u,B)$.\\
		\tcp{Use~$f$ as defined in \eqref{eq:dp-f}, but replace~$D[R, x]$ with \ComputeEntry{$R, x$}.}
	}
	\Return $D[R, v]$ (and mark it as computed).\\
}
\Main{($\SG, \VG, \Vcap, \Scap, \Scost$)}{
	Let~$\SG$ be rooted at some node~$p$.\\
	\For{$v \in V(\SG)$}{
		\lIf{$v$ is not a leaf and $\Scap(v) > 0$}{call \MakeLeaf{$v$}}
	}
	\For{$v \in V(\SG)$}{
		\lIf{$v$ has more than two children}{call \Split{$v$}}
	}
	Initialize table~$D[R, v]$ for all~$R \subseteq \VV$ and~$v \in \SV$.\\
	\Return \ComputeEntry{$\VV, p$}.
}
\nonl
\caption{Algorithm for $\VNEP$ on tree substrates}\label{alg:dynprogram}
\hspace{-12pt}
\end{algorithm}
}

We remark that the first two steps (\cref{lem:map-to-leafs,lem:bin-tree}) can be used as a preprocessing for any algorithms that only work for binary tree substrates on which the capacity of all non-leaf nodes is zero~\cite{oktopus,ccr15emb} to make them work for general tree substrates.

Throughout this section we assume without loss of generality that our substrate graph is bidirectional, that is, for every edge~$(u, v)$ in $\SE$ we also have the edge~$(v, u)$.
Otherwise, we add the missing edge and set its capacity to zero.
Further, we assume that our substrate graph~$\SG$ is a tree rooted at some vertex~$p$.

\paragraph{Introducing additional leaves.}
We first show that we can assume that all non-leaf nodes of our substrate have capacity zero.

\begin{lemma}%
	\label{lem:map-to-leafs}
	Given an instance $I=(\SG,\VG,\Vcap,\Scap,\Scost)$ of \minVNEP{},
	we can build in linear time an instance $\widetilde I=(\tildeSG,\VG,\Vcap,\tildeScap,\tildeScost)$ of \minVNEP{}
	such that
	\begin{enumerate}[label=(\roman*)]
		\item each node $v \in \tildeSV$ of degree at least two fulfills $\tildeScap(v) = 0$, and
		\item there is a solution for $I$ of cost at most $k$
			if and only if
			there is a solution for $\widetilde I$ of cost at most $k$.
	\end{enumerate}
\end{lemma}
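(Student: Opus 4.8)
The plan is to verify that the instance $\widetilde I$ produced by the \texttt{Leaf} routine of \cref{alg:dynprogram} has the two claimed properties. Property~(i) is immediate from the construction: the routine is invoked on every internal (i.e.\ degree-$\geq 2$) node $v$ with $\Scap(v) > 0$, and it overwrites $\Scap(v)$ with $\mathbf 0$ while moving the old capacity and cost onto a fresh pendant leaf $v'$ via $\tildeScap(v') = \Scap(v)$ and $\tildeScost(v') = \Scost(v)$; since $v'$ has degree one and no leaf gains a new neighbour, every node of $\tildeSG$ of degree at least two ends up with $\tildeScap(v) = \mathbf 0$. The added edges $(v,v')$ and $(v',v)$ carry infinite capacity and zero cost, so the ``hop'' between $v$ and its copy is always free and never blocked, and the whole construction is one pass over $\SV$ adding at most one node and two edges per substrate node, giving the linear running time. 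The substance is property~(ii), which I would establish by exhibiting a cost-preserving correspondence between feasible mappings of $I$ and of $\widetilde I$, organised around the projection $\pi \colon \tildeSV \to \SV$ that is the identity on the original nodes and sends each copy $v'$ back to $v$ (equivalently, contracts every $v$--$v'$ edge).

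For the direction ``$I \Rightarrow \widetilde I$'', I would take a feasible mapping $(\mapV,\mapE)$ of cost at most $k$ and first push every request node off the internal substrate nodes: whenever $\mapV(w) = v$ for an internal $v$, set $\tildemapV(w) := v'$, and otherwise keep $\tildemapV(w) := \mapV(w)$. Each request-edge path is then repaired only at its two endpoints, by prepending the hop $(v_1',v_1)$ and appending $(v_2,v_2')$ whenever an endpoint was relocated to a copy; the interior of every path is untouched, since all original substrate nodes and edges survive in $\tildeSG$. I would then check the three feasibility conditions: the total demand landing on a copy $v'$ equals the demand that previously sat on $v$ and hence is at most $\tildeScap(v') = \Scap(v)$; the load on each original substrate edge is unchanged while the new hops have infinite capacity; and the cost is preserved because the hops cost nothing and $\Vcap(w)^\top \tildeScost(v') = \Vcap(w)^\top \Scost(v)$.

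For the converse ``$\widetilde I \Rightarrow I$'', I would take a feasible mapping $(\tildemapV,\tildemapE)$ of cost at most $k$ and project it by $\mapV := \pi \circ \tildemapV$, obtaining each $\mapE$-path from the corresponding $\tildemapE$-path by deleting the hop edges. The key observation is that $\tildeScap(v) = \mathbf 0$ on every internal $v$ forces, through the node-capacity constraint and nonnegativity of demands, every request node sitting on $v$ in $\widetilde I$ to have demand $\mathbf 0$; hence collapsing such nodes onto $v$ in $I$ alters neither the accumulated demand nor the cost, and the total demand gathered at $v$ equals the demand that sat on the copy $v'$, still bounded by $\Scap(v)$. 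Edge loads and total cost are preserved term by term exactly as in the forward direction, so a feasible solution of cost $\le k$ on either side yields one of cost $\le k$ on the other.

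The only genuinely delicate point is the edge bookkeeping: I must ensure that the path-repair (forward) and path-contraction (backward) operations always produce admissible edge mappings and leave the load on every \emph{original} edge literally unchanged, so that no capacity is violated and the edge-cost terms match individually. This is exactly where the free, infinite-capacity $v$--$v'$ hops do the work, and where the argument relies on the interiors of all substrate paths being untouched by the transformation. A secondary subtlety is the degenerate case of a zero-demand request node placed on an internal node, where the cost term formally reads $\mathbf 0^\top \infty$; this is harmless since such a node may be relocated to its adjacent leaf copy without changing feasibility or cost, and on projection it contributes the finite cost $\mathbf 0^\top \Scost(v) = 0$ in $I$.
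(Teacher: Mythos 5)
Your proposal is correct and follows essentially the same route as the paper's proof: the identical pendant-leaf construction with zero-cost, infinite-capacity hop edges, relocation of node images to the fresh leaves in the forward direction, and projection of leaf copies back to their parents in the converse. Your endpoint path-repair is just the explicit form of the paper's ``unique path in the tree'' between the relocated images, and your extra care about zero-demand request nodes sitting on internal nodes is a harmless refinement the paper glosses over.
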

{
\begin{proof}
The idea is to add a fresh leaf for each non-leaf vertex with capacities above zero.
Without loss of generality, we assume that each edge in $G_S$ is bidirectional, 
otherwise we add the missing edge to which nothing can be mapped.
We assume that $G_S$ is rooted at some arbitrary node to avoid ambiguity in the following construction 
about whether a neighbor is a child or the parent.
We construct $\tildeSG$ from $G_S$ by
adding a node $v'$ and edges $(v,v'),(v',v)$ for each node $v \in \SV$ which has children
and set
$\tildeScap(v) := 0$,
$\tildeScap(v') := \Scap(v)$,
$\tildeScost(v') := \Scost(v)$,
$\tildeScap(v,v') := \tildeScap(v',v) := \infty$, and
$\tildeScost(\{v,v'\}) := \tildeScost(v',v) := 0$.
Note that we add at most $O(|\SV|)$ nodes and edges to $G_S$. 
Hence, $\widetilde I$ can be constructed after linear time.
We now show that $I$ has a feasible mapping~$(\mapV, \mapE)$ of cost at most $k$ 
if and only if 
$\widetilde I$ has a feasible mapping~$(\tildemapV, \tildemapE)$ of cost at most $k$.

\smallskip
($\Rightarrow$):
Let $(\mapV,\mapE)$ be a solution for $I$ of cost at most $k$.
For all $v \in \VV$, we set $\tildemapV(v) := \mapV(v)$ if $\mapV(v)$ is of degree at most one,
otherwise we set $\tildemapV(v)$ to the new leaf $\mapV(v)'$ of $\mapV(v)$.
For all~$(u, v) \in \VE$, we set~$\tildemapE(u, v)$ to be the unique path from~$\tildemapV(u)$ to~$\tildemapV(v)$ in~$\tildeSG$.
Note that $(\tildemapV,\tildemapE)$ is a solution for~$\widetilde I$ which has the same cost as $(\mapV,\mapE)$.

\smallskip
($\Leftarrow$):
Let $(\tildemapV,\tildemapE)$ be a solution for $\widetilde I$ of cost at most $k$.
For all $v \in \VV$, we set $\mapV(v) := \tildemapV(v)$ if $\tildemapV(v) \in \SV$,
otherwise $\tildemapV(v)$ is a leaf in $\tildeSG$ and we set $\mapV(v)$ to be the parent of $\tildemapV(v)$.
For all~$(u, v) \in \VE$ we set~$\mapE(u, v)$ to be the unique path from~$u$ to~$v$ in~$\SG$.
Observe that the paths induced by~$\mapE(u, v)$ and by~$\tildemapE(u, v)$ may only differ in the leaves that were (possibly) added to the endpoints.
Thus, by construction, $(\mapV, \mapE)$ is a solution for~$I$ of cost at most~$k$.
\end{proof}
}

\paragraph{Splitting non-leaf nodes.}
Next, we show how to turn the substrate into a binary tree.

\begin{lemma}
\label{lem:bin-tree}
Given an instance $I=(G_S,G_R,\Vcap,\Scap,\Scost)$ of \minVNEP{} with~$\SG$ being a tree,
we can construct in linear time an instance $\widetilde I=(\tildeSG,G_R,\Vcap,\tildeScap,\tildeScost)$ of \minVNEP{} 
such that $\tildeSG$ is a binary tree and
there is a solution for $I$ of cost at most $k$ 
if and only if
there is a solution for $\widetilde I$ of cost at most $k$.
\end{lemma}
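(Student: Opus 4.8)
The plan is to iterate the \texttt{Split} operation of \cref{alg:dynprogram} until every node has at most two children. Concretely, while some node $v$ has children $u_1,\dots,u_t$ with $t>2$, I would introduce two fresh nodes $v_\ell,v_r$ with $\tildeScap(v_\ell)=\tildeScap(v_r)=0$ and $\tildeScost(v_\ell)=\tildeScost(v_r)=\infty$, reattach $u_1,\dots,u_{\lfloor t/2\rfloor}$ as children of $v_\ell$ and $u_{\lfloor t/2\rfloor+1},\dots,u_t$ as children of $v_r$ (keeping the capacity and cost of the inherited edges), make $v$ the parent of $v_\ell$ and $v_r$, and set the four new \emph{auxiliary} edges between $v$ and $v_\ell,v_r$ to infinite capacity and zero cost. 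Since $\lfloor t/2\rfloor$ and $\lceil t/2\rceil$ are both strictly smaller than $t$ for $t>2$, recursing on $v_\ell$ and $v_r$ terminates and leaves a binary tree $\tildeSG$.

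For the running time, splitting the children of a single node $v$ into such a balanced routing gadget creates a number of new nodes and edges that is linear in the number of children of $v$; summing the number of children over all nodes of the (rooted) tree $\SG$ gives $|\SV|-1$, so the whole construction adds $O(|\SV|)$ nodes and edges and runs in linear time. The crucial structural observation is that contracting all auxiliary edges turns $\tildeSG$ back into $\SG$: this defines a projection $\pi\colon\tildeSV\to\SV$ that is the identity on original nodes and maps every fresh node to the original node from whose split gadget it descends. Under $\pi$, the inherited edges of $\tildeSG$ are in cost- and capacity-preserving bijection with the edges of $\SE$, while the auxiliary edges are free ($\tildeScost=0$) and uncapacitated ($\tildeScap=\infty$); moreover, because each fresh node has capacity $0$, no request node of positive demand can be mapped onto it.

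With this correspondence the equivalence follows in both directions. For ($\Rightarrow$), given a solution $(\mapV,\mapE)$ for $I$ I would keep $\tildemapV:=\mapV$ (all original nodes survive in $\tildeSG$) and route each request edge $(u,v)\in\VE$ along the unique $\tildeSG$-path between $\tildemapV(u)$ and $\tildemapV(v)$. This path traverses exactly the inherited edges corresponding to the unique $\SG$-path of $\mapE(u,v)$, plus some auxiliary edges; the former contribute the same cost and the same load as before while the latter are free and never a bottleneck, so $(\tildemapV,\tildemapE)$ is feasible of the same cost. For ($\Leftarrow$), given a solution $(\tildemapV,\tildemapE)$ for $\widetilde I$ of cost at most $k$ I would set $\mapV:=\pi\circ\tildemapV$ and again route request edges along unique $\SG$-paths; any request node that $\tildemapV$ places on a fresh node necessarily has demand $0$, so relocating it to its $\pi$-image changes neither feasibility nor cost, and projecting the edge paths via $\pi$ drops only auxiliary (zero-cost) edges. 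In both directions the per-edge aggregate load matches across the bijection, so node- and edge-capacity constraints and the total cost are preserved.

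I expect the main obstacle to be purely the bookkeeping of this path/edge correspondence rather than any single conceptual difficulty: one has to verify carefully that the aggregate edge-capacity constraints of \minVNEP{} (the sum of demands routed over a substrate edge) transfer across $\pi$, i.e.\ that each inherited edge of $\tildeSG$ carries exactly the same total load as its image in $\SG$, and that the zero-demand corner case for node mappings onto fresh nodes is handled consistently in the ($\Leftarrow$) direction.
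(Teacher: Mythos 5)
Your proposal is correct and follows essentially the same approach as the paper: both replace each high-degree node by a binary routing gadget of zero-capacity internal nodes joined by free, uncapacitated auxiliary edges, and both argue the equivalence via the natural correspondence (your projection $\pi$) between inherited edges of $\tildeSG$ and edges of $\SG$, handling the zero-demand corner case for request nodes landing on fresh gadget nodes. Your recursive \texttt{Split} formulation matches the paper's pseudocode exactly, while the paper's prose proof inserts the whole binary tree $T_v$ in one step (with internal cost $k+1$ rather than $\infty$, which sidesteps the $0\cdot\infty$ ambiguity you implicitly rely on); this is a cosmetic difference only.
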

\begin{proof}
In a nutshell, we are going to replace a node with more than two children with a binary tree of sufficient size.

To construct $\tildeSG = (\tildeSV, \tildeSE)$ from $\SG$, as long as there is a node~$v$ with~$c>2$ children, we replace it with a fresh rooted bidirectional binary tree~$T_v$ with root~$v'$ and $c$ leaves.
We add an edge between $v'$ and the parent of~$v$,
and we add an edge between each child of~$v$ and one designated leaf of~$T_v$.
Furthermore, we set the capacity and cost of the root of $T_v$ to $\Scap(v)$ and $\Scost(v)$, respectively.
All other nodes of $T_v$ get capacity zero and cost $k+1$.
All edges of $T_v$ get capacity $\infty$ and cost zero.
Let~$v_p$ be the parent of~$v$, let~$u$ be a child of~$v$, and let~$v_u$ be the leaf node in~$T_v$ which is adjacent to~$u$.
Then, we set 
$\tildeScap(v',v_p)  := \Scap(v,v_p)$,
$\tildeScap(v_p,v')  := \Scap(v_p,v)$,
$\tildeScost(v',v_p) := \Scost(v,v_p)$,
$\tildeScost(v_p,v') := \Scost(v_p,v)$,
$\tildeScap(v_u,u)   := \Scap(v,u)$,
$\tildeScap(u,v_u)   := \Scap(u,v)$,
$\tildeScost(v_u,u)  := \Scost(v,u)$, and
$\tildeScost(u,v_u)  := \Scost(u,v)$.
All other values of $\tildeScap$ and $\tildeScost$ are equal to $\Scap$ and $\Scost$, respectively. 

Note that by the handshake lemma (the sum of degrees of is twice the number of edges in a graph), $\tildeSG$ is of size $O(|\SV|)$, because $T_v$ is of size $O(\deg_{G_S}(v))$.
Hence, we can construct $\widetilde I$ in linear time.

We show that $I$ has a feasible mapping~$(\mapV, \mapE)$ of cost at most $k$ 
if and only if 
$\widetilde I$ has a feasible mapping~$(\tildemapV, \tildemapE)$ of same cost.%

($\Rightarrow$):
Let $(\mapV,\mapE)$ be a feasible mapping for $I$ of cost at most $k$.
For all $v \in \VV$, we set $\tildemapV(v) := \mapV(v)$ if $\mapV(v) \in \tildeSV$,
otherwise we set $\mapV(v)$ to be the root of $T_v$.
Hence, we have for all $v \in \tildeSV$ that $\sum_{w : \tildemapV(w) = v} \Vcap(w) \leq \tildeScap(v)$.
For all~$(u, v) \in \VE$, we set~$\tildemapE(u, v)$ to be the unique path from~$\tildemapV(u)$ to~$\tildemapV(v)$ in~$\tildeSG$ (recall that~$\tildeSG$ is a tree).
So for all $e_S \in \tildeSE$ we have $\sum_{e_R:e_S \in \tildemapE(e_R)} \Vcap(e_R) \leq \tildeScap(e_S)$ 
and for all $(u,v) \in \VE$ we have that $\tildemapE(u,v)$ is a path from~$\tildemapV(u)$ to~$\tildemapV(v)$.
Moreover, by our construction, we get that 
\begin{align*}
   & \sum_{v \in \VV} \Vcap(v)^\top \tildeScost(\tildemapV(v)) + 
   \sum_{e \in \VE} \Big(\sum_{e' \in \tildemapE(e)} \Vcap(e)^\top\tildeScost(e') \Big) \\   
		= &
\sum_{v \in \VV} \Vcap(v)^\top \Scost(\mapV(v)) + 
\sum_{e \in \VE} \Big(\sum_{e' \in \mapE(e)} \Vcap(e)^\top\Scost(e') \Big)
\leq k.
\end{align*}
Thus, $(\tildemapV,\tildemapE)$ is a feasible mapping for $\widetilde I$ of cost at most $k$.

\smallskip
($\Leftarrow$):
Let $(\tildemapV,\tildemapE)$ be a feasible mapping for $\widetilde I$ of cost at most $k$.
Let $v \in \VV$.
Note that if $\tildemapV(v)\notin \SV$, then there must be a node $w \in \SV$ 
such that $\tildemapV(v)$ is a node in~$T_w$.
Hence, we set~$\mapV(v) := \tildemapV(v)$ if $\tildemapV(v)\in \SV$,
otherwise we set~$\mapV(v) := w$, where~$w \in \SV$ is the node replaced by~$T_w$ and~$\tildemapV(v)$ is a node of~$T_w$. 
So, for all~$v \in \SV$, we have $\sum_{w:\mapV(w)=v} \Vcap(w) \leq \Scap(v)$.
For all~$(u, v) \in \VE$ we set~$\mapE(u, v)$ to be the unique path in~$\SG$ from~$\mapV(u)$ to~$\mapV(v)$.
Note that the path induced by~$\tildemapE(u, v)$ consists of a subset of edges of~$\mapE(u, v)$; thus for all~$e_S \in \SE$ we have $\sum_{e_R:e_S \in \mapE(e_R)} \Vcap(e_R) \le \Scap(e_S)$.
Moreover, we have that
\begin{align*}
 & \sum_{v \in \VV} \Vcap(v)^\top \Scost(\mapV(v)) + \sum_{e \in \VE} \Big(\sum_{e' \in \mapE(e)} \Vcap(e)^\top\Scost(e') \Big) \\
		= &
    \sum_{v \in \VV} \Vcap(v)^\top\tildeScost(\tildemapV(v)) + \sum_{e \in \VE} \Big(\sum_{e' \in E(\tildemapE(e))} \Vcap(e)^\top \tildeScost(e') \Big)
\leq k.
\end{align*}
Thus, $(\mapV,\mapE)$ is a feasible mapping for $I$ of cost at most $k$.
\end{proof}

\paragraph{Dynamic program.}
Now that we have created an instance in which the substrate is a binary tree in which only the leaf nodes have nonzero capacity, we can formulate our dynamic program.
Let~$p$ be the root of~$\SG$.
For each $v \in \SV$, let $T_v$ be the induced subtree of $G_S$ where $v$ is the root, 
that is, $T_v$ contains all nodes $u$ for which the path from~$u$ to~$p$ visits~$v$.
We assume that $G_S$ is a full binary tree, i.e., each node is either a leaf or has two children
(otherwise we add a fresh leaf to which nothing can be mapped).

Removing the edges $(v,u),(u,v) \in \SE$ splits the tree $G_S$ into two rooted trees.
Without loss of generality assume that $v$ is the parent of $u$ in $G_S$.
Hence, one of the trees is $T_u$ and the other one is $T' := \SG[\SV \setminus V(T_u)]$.
Note that for a given solution $(\mapV,\mapE)$ of $I$,
the cut $\{(v,u),(u,v)\}$ also splits the mapping of $G_R$ into two parts $B := \{ w \in \VV \mid \mapV(w) \in V(T_u) \}$ and 
$A := \VV \setminus B$.
Further, for each edge $e \in \VE$ we have that $(v,u) \in \mapE(e)$ 
if and only if $e \in \cut_{G_R}(A) := \{ (x,y) \in \VE \mid x \in A, y \not \in A \}$,
and moreover $(u,v) \in \mapE(e)$
if and only if $e \in \cut^-_{G_R}(A) := \cut_{G_R}(\VV \setminus A)$,
since every path from $T'$ to $T_u$ must contain $(v,u)$ and
every path from $T_u$ to $T'$ must contain $(u,v)$.
We use this observation to describe a dynamic program 
in which each entry $D[R,v]$ contains the minimum cost for a feasible mapping of $G_R[R]$ into $T_v$ plus the induced cost of $\cut_{G_R}(A) \cup \cut^-_{G_R}(A)$ on edges in $T_v$.

Let $v \in \SV$ and $R \subseteq \VV$.
If $v$ is a leaf, then
\begin{align}
		\label{eq:dp-leaf}
 	D[R,v] \coloneqq
	\begin{cases}
		\infty, & \text{if } \sum_{u \in R} \Vcap(u) \not \leq \Scap(v)\\
		\sum_{u \in R} \Vcap(u)^\top\Scost(v),  & \text{ otherwise.}
	\end{cases}
\end{align}
If $v$ is not a leaf, then
\begin{align}
		\label{eq:dp-non-leaf}
		D[R,v] \coloneqq \min_{A \uplus B = R} f(v,w,A)+f(v,u,B),
\end{align}
where $w$ and $u$ are the neighbors of $v$ in $T_v$ and for $x \in \{w,u\}$ the function $f$ is defined as \ifarxiv{}\else{}$f(v,x,R)$\fi{}
\ifarxiv{}
\begin{align}
		\label{eq:dp-f}
		f(v,x,R)\coloneqq 
		\begin{cases}
				\infty,  \,\ \;\;\;\;\;\qquad\qquad\qquad\text{ if }\displaystyle \sum_{ e \in \cut_{G_R}(R)} \Vcap(e) \not\leq \Scap(x,v),\\
				\infty,  \,\ \;\;\;\;\qquad\qquad\qquad\text{ if }\displaystyle \sum_{ e \in \cut^-_{G_R}(R)} \Vcap(e) \not\leq \Scap(v,x),\\
				D[R,x] + %
				\displaystyle \sum_{ e \in \cut^-_{G_R}(R)\cup \cut_{G_R}(R)} \!\!\!\!\!\!\!\!\!\Vcap(e)^\top\Scost(v,x)				%
				, \hfill \text{ otherwise.}
		\end{cases}	
\end{align}
\else{}
\begin{align}
		\label{eq:dp-f}
		\coloneqq 
		\begin{cases}
				\infty,  \,\ \;\;\;\;\;\qquad\qquad\qquad\text{ if }\displaystyle \sum_{ e \in \cut_{G_R}(R)} \Vcap(e) \not\leq \Scap(x,v),\\
				\infty,  \,\ \;\;\;\;\qquad\qquad\qquad\text{ if }\displaystyle \sum_{ e \in \cut^-_{G_R}(R)} \Vcap(e) \not\leq \Scap(v,x),\\
				D[R,x] + %
				\displaystyle \sum_{ e \in \cut^-_{G_R}(R)\cup \cut_{G_R}(R)} \!\!\!\!\!\!\!\!\!\Vcap(e)^\top\Scost(v,x)				%
				, \hfill \text{ otherwise.}
		\end{cases}	
\end{align}
\fi{}
To show the correctness of the dynamic program (defined in \eqref{eq:dp-leaf} and \eqref{eq:dp-non-leaf}), 
we introduce the following notations and definitions.
For~$v \in \SV$, for~$(x,y) \in E(T_v)$, for~$X \subseteq \VV$, and $\mapV \colon X \rightarrow V(T_v)$,
let~$\mathcal P_{(x,y)}^v(X)$ be the set of paths $P$ within~$T_v$ between~$v$ and a node~$\mapV(u^*)$ 
such that $(x,y)$ is in $P$, and if $v$ is the start node of $P$, $u^* \in X$ is the sink of an edge in~$\cut^-_{G_R}(X)$, 
otherwise $u^* \in X$ is the source of an edge in $\cut_{G_R}(X)$.
Furthermore, let \ifarxiv{}\else{}$E_{(x,y)}^v(X) :=$\fi{}
\ifarxiv{}
\begin{align*}
	E_{(x,y)}^v(X) := & \{ (u^*,w^*) \in \cut_{G_R}(X) \mid (x,y) \text{ is on the $\mapV(u^*)$--$v$-path in $T_v$} \}\\
	\cup	& \{ (w^*,u^*) \in \cut^-_{G_R}(X) \mid (x,y) \text{ is on the $v$--$\mapV(u^*)$-path in $T_v$} \}. 
\end{align*}
\else{}
\begin{align*}
		& \{ (u^*,w^*) \in \cut_{G_R}(X) \mid (x,y) \text{ is on the $\mapV(u^*)$--$v$-path in $T_v$} \}\\
	\cup	& \{ (w^*,u^*) \in \cut^-_{G_R}(X) \mid (x,y) \text{ is on the $v$--$\mapV(u^*)$-path in $T_v$} \}. 
\end{align*}
\fi{}
\begin{definition}
	For a node $v \in \SV$ and a subset $X \subseteq \VV$.
	We call a feasible mapping $(\mapV,\mapE)$ of $G_R[X]$ to $T_v$ \emph{desirable} if
		for every edge~$e_S \in E(T_v)$ we have
		\begin{equation}
			\label{eq:dp-desirable-1}
			\sum_{e_R : e_S \in \mapE(e_R)} \Vcap(e_R) \le \Scap(e_S) - \sum_{e \in E_{e_S}^v(X)} d_R(e).
		\end{equation}
	Furthermore, we say that the \emph{induced cost} of $(\mapV,\mapE)$ is
		\begin{equation}
			\label{eq:dp-desirable-2}
			\begin{aligned}
				&\sum_{w \in X} \Vcap(w)^\top\Scost(\mapV(w)) + 
				\sum_{e \in  E(G_R[X])} \Big(\sum_{e' \in \mapE(e)} \Vcap(e)^\top\Scost(e')\Big)+
				\\
				&\sum_{e \in \cut_{G_R}(X)} \!\Big(\sum_{e' \in P_e} \!\!\Vcap(e)^\top\Scost(e')\Big) 
					+\sum_{e \in \cut^-_{G_R}(X)} \!\Big(\sum_{e' \in P^-_e} \!\!\Vcap(e)^\top\Scost(e')\Big).
			\end{aligned}
		\end{equation}
		Here~$P_e$ is the set of edges of the path from the source of~$e$ to~$v$ in~$T_v$
		and~$P^-_e$ is the set of edges of the path from~$v$ to the target of~$e$ in~$T_v$.
\end{definition}
Later, our algorithm will report that the minimum cost for a solution is $D[\VV,p]$.
We show that indeed there is such a solution.
\begin{lemma}%
		\label{lem:dp-induction1}
		Let $v \in \SV$ and $X \subseteq \VV$.
		If $D[X,v] < \infty$, then there is a desirable mapping $(\mapV,\mapE)$ of
		$G_R[X]$ onto $T_v$ where the induced cost is at most $D[X,v]$.
\end{lemma}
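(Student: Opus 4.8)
The plan is to prove this by induction on the structure of the binary tree $\SG$, following the recursive definition of $D[X,v]$ in \eqref{eq:dp-leaf} and \eqref{eq:dp-non-leaf}. The base case handles leaves and the inductive step handles internal nodes, constructing an explicit desirable mapping whose induced cost matches the tabulated value.

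\begin{proof}[Proof plan]
I would proceed by induction on the height of~$v$ in~$\SG$.

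\textbf{Base case ($v$ a leaf).}
Suppose $D[X,v]<\infty$. By \eqref{eq:dp-leaf} this means $\sum_{u \in X}\Vcap(u) \le \Scap(v)$, and $D[X,v] = \sum_{u \in X}\Vcap(u)^\top\Scost(v)$. I define $\mapV(u) := v$ for all $u \in X$, and $\mapE(e)$ to be the empty (length-zero) path for every $e \in E(G_R[X])$, since both endpoints map to~$v$. Since $T_v$ consists of the single node~$v$ with no edges, condition \eqref{eq:dp-desirable-1} is vacuous, and the node capacity is respected by the case assumption, so the mapping is desirable. The induced cost \eqref{eq:dp-desirable-2} reduces to exactly the node-cost term $\sum_{u\in X}\Vcap(u)^\top\Scost(v)$, matching $D[X,v]$.

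\textbf{Inductive step ($v$ internal with children $w,u$).}
Suppose $D[X,v]<\infty$. By \eqref{eq:dp-non-leaf} there is a partition $A \uplus B = X$ attaining the minimum with $f(v,w,A)+f(v,u,B) = D[X,v]<\infty$, so both $f$-values are finite. By definition of~$f$ in \eqref{eq:dp-f}, finiteness forces $D[A,w]<\infty$ and $D[B,u]<\infty$, together with the capacity checks $\sum_{e\in\cut_{G_R}(A)}\Vcap(e)\le\Scap(w,v)$, $\sum_{e\in\cut^-_{G_R}(A)}\Vcap(e)\le\Scap(v,w)$, and the symmetric pair for~$B$. By the induction hypothesis applied to $(A,w)$ and $(B,u)$, there are desirable mappings $(\mapV^A,\mapE^A)$ of $G_R[A]$ onto $T_w$ and $(\mapV^B,\mapE^B)$ of $G_R[B]$ onto $T_u$ with induced costs at most $D[A,w]$ and $D[B,u]$, respectively. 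I would combine them into a single mapping $(\mapV,\mapE)$ on $X$: set $\mapV := \mapV^A \cup \mapV^B$ (valid since $A,B$ partition $X$ and $T_w,T_u$ are disjoint subtrees of $T_v$), and for each $e=(x,y)\in E(G_R[X])$ let $\mapE(e)$ be the unique $\mapV(x)$--$\mapV(y)$ path in the tree $T_v$. Crucially, edges of $G_R[X]$ with both endpoints in $A$ (resp.\ $B$) have their path entirely within $T_w$ (resp.\ $T_u$) and agree with the subproblem's edge mapping; edges crossing between $A$ and $B$ route through~$v$ and the two connecting edges $(v,w),(v,u)$ (and their reverses).

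\textbf{Verification.}
The main work is checking that the combined mapping is desirable and that its induced cost equals $D[X,v]$. For desirability \eqref{eq:dp-desirable-1}, I would argue edge-by-edge: for $e_S \in E(T_w)$, the relevant crossing edges $E^v_{e_S}(X)$ coincide with $E^w_{e_S}(A)$ augmented by the $A$-internal routing already accounted for by the desirability of the subproblem, and symmetrically for $E(T_u)$; the two new edges $(v,w)$ and $(v,u)$ are controlled exactly by the finite-capacity conditions extracted from $f$. For the cost, I would partition the induced-cost expression \eqref{eq:dp-desirable-2} into contributions from $T_w$ and $T_u$. The node costs and $G_R[X]$-internal edge costs split cleanly by $A,B$; the paths $P_e,P^-_e$ for cut edges decompose into a portion inside the child subtree (charged to the subproblem's induced cost) plus the single connecting edge $(v,w)$ or $(v,u)$ (charged to the additive $\Scost$ term in the third case of~$f$). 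Summing, the total is $f(v,w,A)+f(v,u,B)=D[X,v]$. The \textbf{main obstacle} is precisely this bookkeeping: carefully matching the sets $E^v_{e_S}(X)$ and the path decompositions $P_e = P_e|_{T_w} \cup \{(v,w)\}$ so that every cut edge's cost and capacity consumption is attributed to exactly one subtree plus its connecting edge, without double-counting. Once that correspondence is established, both \eqref{eq:dp-desirable-1} and \eqref{eq:dp-desirable-2} follow by direct summation.
\end{proof}
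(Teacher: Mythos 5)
Your proposal is correct and follows essentially the same route as the paper's proof: structural induction on the substrate tree, extracting the optimal partition $A \uplus B = X$ from \eqref{eq:dp-non-leaf}, combining the two subtree mappings via unique tree paths for crossing edges, and verifying desirability and cost by splitting the contributions between the child subtrees and the connecting edges. The only nitpick is that the final total should be stated as \emph{at most} $D[X,v]$ (since the induction hypothesis only bounds the subproblem costs from above), which is all the lemma requires.
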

{
\begin{proof}
		We show this by induction over the tree $G_S$.
		By the definition in \eqref{eq:dp-leaf}, every mapping of~$G_R[X]$, $X \subseteq \VV$, onto a leaf~$v \in \SV$ is desirable
		and has induced costs of $D[X,v]$.

		For the induction step, let $v \in \SV$ be a non-leaf, let $X \subseteq \VV$,
		and assume that for all~$u \in V(T_v) \setminus \{v\}$ we have that if~$D[Y, u] < \infty$.
		Then there is a desirable mapping of~$G_R[Y]$ onto~$T_u$ with induced cost of at most $D[Y, u]$.
		Assume further that~$D[X, v] < \infty$, and let~$a$ and $b$ be the children of~$v$.
		Then by the definition in \eqref{eq:dp-non-leaf} there is a partition $A \uplus B = X$
		such that\ifarxiv{}\else{}$D[X,v] = D[A,a] + D[B,b] +$\fi{}
		\ifarxiv{}
		\begin{equation}
			\label{eq:ind1}
			\begin{split}
			D[X,v] = D[A,a] + D[B,b] +
				\sum_{ e \in \cut^-_{G_R}(A)} \Vcap(e)^\top\Scost(v,a)+
				   \sum_{ e \in \cut_{G_R}(A)} \Vcap(e)^\top\Scost(a,v)\\
				+ \sum_{ e \in \cut^-_{G_R}(B)} \Vcap(e)^\top\Scost(v,b)
				   + \sum_{ e \in \cut_{G_R}(B)} \Vcap(e)^\top\Scost(b,v).
	\end{split}
		\end{equation}
		\else{}
		\begin{equation}
			\label{eq:ind1}
			\begin{split}
				\sum_{ e \in \cut^-_{G_R}(A)} \Vcap(e)^\top\Scost(v,a)+
				   \sum_{ e \in \cut_{G_R}(A)} \Vcap(e)^\top\Scost(a,v)\\
				+ \sum_{ e \in \cut^-_{G_R}(B)} \Vcap(e)^\top\Scost(v,b)
				   + \sum_{ e \in \cut_{G_R}(B)} \Vcap(e)^\top\Scost(b,v).
	\end{split}
		\end{equation}
		\fi{}
		Thus, $D[A,a] < \infty$ and $D[B,b] < \infty$, and we get by assumption that there are desirable mappings $({\mapV}^a,{\mapE}^a)$ and $({\mapV}^b,{\mapE}^b)$ of $G_R[A]$ onto $T_a$
		and of $G_R[B]$ onto $T_b$, respectively. 

		We create a mapping~$(\mapV, \mapE)$ of~$G_R[X]$ onto~$T_v$ with
		\begin{equation}
			\label{eq:dp-mapX}
		\begin{aligned}
			\mapV(x) &:=
				\begin{cases}
					{\mapV}^a(x),& x \in A,\\
					{\mapV}^b(x),& x \in B,
				\end{cases}
			\qquad \qquad \qquad \text{and}\\
			\mapE(x,y) &:=
				\begin{cases}
					{\mapE}^a(x,y),& x,y \in A,\\
					{\mapE}^b(x,y),& x,y \in B,\\
					\text{path from ${\mapV}^a(x)$ to ${\mapV}^b(y)$ in $T_v$}, & x \in A, y \in B,\\
					\text{path from ${\mapV}^b(x)$ to ${\mapV}^a(y)$ in $T_v$}, & x \in B, y \in A.
				\end{cases}
		\end{aligned}
		\end{equation}

		Observe that $(\mapV, \mapE)$ is a feasible mapping of~$G_R[X]$ onto~$T_v$:
		Let~$(x, y)$ be an edge in~$E(G_R[X])$ such that one endpoint is in~$A$ and the other endpoint is in~$B$.
		Then every edge in~$T_v$ that is on a path from~$\mapV(x)$ to~$\mapV(y)$ has sufficient capacity to map all edges of~$\mapE(x, y)$ as defined in \eqref{eq:dp-mapX}.
		Hence, \eqref{eq:dp-desirable-1} for $({\mapV}^a,{\mapE}^a)$ and $({\mapV}^b,{\mapE}^b)$
		implies that 
		for every edge~$e_S \in E(T_v)$ we have
		\begin{equation*}
		\ifarxiv{}
			\sum_{e_R : e_S \in \mapE(e_R)} \Vcap(e_R) \le \Scap(e_S) - \sum_{e \in E_{e_S}^v(X)} d_R(e).
		\else{}
			\sum\nolimits_{e_R : e_S \in \mapE(e_R)} \Vcap(e_R) \le \Scap(e_S) - \sum\nolimits_{e \in E_{e_S}^v(X)} d_R(e).
		\fi{}
		\end{equation*}
 
		Moreover, for all $c \in \{a,b\}$, a path from a node in $V(T_c)$ to $v$ contains the edge $(v,c)$ and
		a path from $v$ to some node in $V(T_c)$ contains the edge $(c,v)$.
		Hence, the induced cost of $(\mapV,\mapE)$ is the sum of
		the induced cost of $({\mapV}^a,{\mapE}^a)$ and $({\mapV}^b,{\mapE}^b)$ and
		\begin{align*}
			&
			\sum_{ e \in \cut^-_{G_R}(A)} \Vcap(e)^\top\Scost(v,a)
			+ \sum_{ e \in \cut_{G_R}(A)} \Vcap(e)^\top\Scost(a,v) + 
			\\
			&
			\sum_{ e \in \cut^-_{G_R}(B)} \Vcap(e)^\top\Scost(v,b)
			+ \sum_{ e \in \cut_{G_R}(B)} \Vcap(e)^\top\Scost(b,v).
		\end{align*}
		Thus, by \eqref{eq:ind1} the induced cost of $(\mapV,\mapE)$ is at most $D[X,v]$, 
		because the induced cost of $({\mapV}^a,{\mapE}^a)$ is at most $D[A,a]$ and
		the induced cost of $({\mapV}^b,{\mapE}^b)$ is at most $D[B,b]$.
		
		Finally, since $D[X,v] < \infty$ we get by \eqref{eq:dp-non-leaf}
		that \eqref{eq:dp-desirable-1} holds for $(\mapV,\mapE)$ as well.
		Thus, $(\mapV,\mapE)$ is a desirable mapping of~$G_R[X]$ onto~$T_v$ of induced cost at most $D[X,v]$, and we are done.
\end{proof}
}
\allowdisplaybreaks
Moreover, we also need to show that 
if there is feasible mapping for $I$ of cost at most $k$, then $D[\VV,p] \leq k$.
More formally, we show:

\begin{lemma}%
		\label{lem:dp-induction2}
		Let $v \in \SV$ and $(\mapV,\mapE)$ be a feasible mapping for $I$ of cost at most $k$.
		Then,
		\ifarxiv{}
		\begin{align*}
			D[X, v] &\le \sum_{w \in X} \Vcap(w)^\top\Scost(\mapV(w))\\
				&+
						\sum_{e \in \cut_{G_R}(X) \cup \cut^-_{G_R}(X)  \cup E(G_R[X])} 
			\Big(
				\sum_{e' \in \mapE(e) \cap E(T_v)}
				\Vcap(e)^\top 
				\Scost(e')
			\Big),
		\end{align*}
		\else{}
		\begin{align*}
				&D[X, v] \le \sum\nolimits_{w \in X} \Vcap(w)^\top\Scost(\mapV(w))\\ 
				&+
						\sum_{e \in \cut_{G_R}(X) \cup \cut^-_{G_R}(X)  \cup E(G_R[X])} 
			\Big(
				\sum_{e' \in \mapE(e) \cap E(T_v)}
				\Vcap(e)^\top 
				\Scost(e')
			\Big),
		\end{align*}
		\fi{}
		where $X := \{ w \in \VV \mid \mapV(w)\in V(T_v)\}$.
\end{lemma}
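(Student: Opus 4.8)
The plan is to prove the inequality by bottom-up induction on the binary tree $\SG$, matching the structure of the recurrence~\eqref{eq:dp-non-leaf}. Throughout, write $X = \{w \in \VV \mid \mapV(w) \in V(T_v)\}$ and let $\Phi(v)$ denote the right-hand side of the claimed inequality, so that the goal is to show $D[X,v] \le \Phi(v)$ for every $v \in \SV$.

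For the base case I would take $v$ to be a leaf. Then $E(T_v) = \emptyset$, so the edge sum in $\Phi(v)$ is empty and $\Phi(v) = \sum_{w \in X} \Vcap(w)^\top \Scost(v)$. Since $(\mapV, \mapE)$ is feasible for $I$, the node capacities are respected, and in particular $\sum_{w \in X} \Vcap(w) = \sum_{w : \mapV(w) = v} \Vcap(w) \le \Scap(v)$. Hence \eqref{eq:dp-leaf} takes its non-$\infty$ branch and yields $D[X,v] = \sum_{w \in X} \Vcap(w)^\top \Scost(v) = \Phi(v)$, settling the base case with equality.

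For the inductive step let $v$ be a non-leaf with children $a$ and $b$, and set $A = \{w : \mapV(w) \in V(T_a)\}$ and $B = \{w : \mapV(w) \in V(T_b)\}$. As $v$ is a non-leaf, it has $\Scap(v) = 0$ after the transformation of \cref{lem:map-to-leafs}, so no request node is mapped to $v$ and therefore $X = A \uplus B$. Feeding this particular partition into \eqref{eq:dp-non-leaf} gives $D[X,v] \le f(v,a,A) + f(v,b,B)$. I would first check that both $f$-values are finite: every request edge in $\cut_{G_R}(A)$ is routed by $\mapE$ through $(a,v)$ and every edge in $\cut^-_{G_R}(A)$ through $(v,a)$, so feasibility of $(\mapV,\mapE)$ forces $\sum_{e \in \cut_{G_R}(A)} \Vcap(e) \le \Scap(a,v)$ and $\sum_{e \in \cut^-_{G_R}(A)} \Vcap(e) \le \Scap(v,a)$ (and symmetrically for $b$), so both capacity guards in \eqref{eq:dp-f} are passed; the values $D[A,a]$ and $D[B,b]$ are finite by the induction hypothesis. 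Expanding the two $f$-values via \eqref{eq:dp-f} and then bounding $D[A,a] \le \Phi(a)$ and $D[B,b] \le \Phi(b)$ by the induction hypothesis reduces the claim to showing that $\Phi(a) + \Phi(b)$ plus the connecting-edge terms of the two $f$-values equals $\Phi(v)$.

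The heart of the argument, and the step I expect to be the main obstacle, is verifying that this last identity is tight. The node costs split immediately since $X = A \uplus B$. For the edge costs I would argue along the cut structure of the tree: $T_v$ joins the rest of $\SG$ only through $v$'s parent edge, and $T_a$, $T_b$ join $v$ only through $\{(a,v),(v,a)\}$ and $\{(b,v),(v,b)\}$. Classifying each request edge $e$ by the location of its endpoints (both in $A$; both in $B$; one in $A$ and one in $B$; one in $X$ and one outside; both outside), one checks that the portion of $\mapE(e)$ lying in $E(T_v)$ decomposes into its portion in $E(T_a)$ (charged by the induction hypothesis for $a$, through membership of $e$ in $E(G_R[A]) \cup \cut_{G_R}(A) \cup \cut^-_{G_R}(A)$), its portion in $E(T_b)$ (analogously for $b$), and its use of the four connecting edges (charged by the explicit $\Scost(a,v)$-, $\Scost(v,a)$-, $\Scost(b,v)$-, $\Scost(v,b)$-terms in $f(v,a,A)$ and $f(v,b,B)$, exactly as in \eqref{eq:ind1}). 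Edges with both endpoints outside $X$ contribute nothing, as their paths avoid $T_v$ entirely and they lie outside $\cut_{G_R}(X) \cup \cut^-_{G_R}(X) \cup E(G_R[X])$. Summing these contributions gives $\Phi(a) + \Phi(b) + (\text{connecting terms}) = \Phi(v)$; combined with $D[X,v] \le f(v,a,A) + f(v,b,B) = D[A,a] + D[B,b] + (\text{connecting terms}) \le \Phi(a) + \Phi(b) + (\text{connecting terms})$, this yields $D[X,v] \le \Phi(v)$ and completes the induction.
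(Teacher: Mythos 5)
Your proposal is correct and follows essentially the same route as the paper's proof: structural induction over the tree, plugging the mapping-induced partition $X=A\uplus B$ into the recurrence \eqref{eq:dp-non-leaf}, applying the induction hypothesis to $D[A,a]$ and $D[B,b]$, and accounting for the connecting edges $(a,v),(v,a),(b,v),(v,b)$ via the cut structure of the tree. Your explicit check that the capacity guards in \eqref{eq:dp-f} are passed (so the $f$-values are finite) is a detail the paper leaves implicit, but it does not change the argument.
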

{
\begin{proof}
		We show the statement of the lemma by structural induction over the tree $G_S$.
		By the definition  in \eqref{eq:dp-leaf}, this is true for all leaves $v \in G_S$ as $E(T_v) = \emptyset$.

		For the induction step let $v \in \SV$ be a non-leaf node, let $X := \{ w \in \VV \mid \mapV(w)\in V(T_v)\}$,
		and assume that 
		for all nodes $u \in V(T_v) \setminus \{v\}$ we have 
		\ifarxiv{}
		\begin{equation}
				\begin{split}
			\label{eq:dp-assumption-Y}
			D[Y, u] \le
				\sum_{w \in Y}\Vcap(w)^\top\Scost(\mapV(w))
				+
				\sum_{e \in \cut_{G_R}(Y)\cup \cut^-_{G_R}(Y)\cup E(G_R[Y])} 
				\Big(
		\sum_{e' \in \mapE(e) \cap E(T_u)} \hspace{-12pt}
		\Vcap(e)^\top\Scost(e')\Big),
				\end{split}
		\end{equation}
		\else{}
		\begin{equation}
				\begin{split}
			\label{eq:dp-assumption-Y}
			D[Y, u] \le
				\sum_{w \in Y}\Vcap(w)^\top\Scost(\mapV(w))+\\
				\sum_{e \in \cut_{G_R}(Y)\cup \cut^-_{G_R}(Y)\cup E(G_R[Y])} 
				\Big(
		\sum_{e' \in \mapE(e) \cap E(T_u)} \hspace{-12pt}
		\Vcap(e)^\top\Scost(e')\Big),
				\end{split}
		\end{equation}
		\fi{}
		where $Y:=\{ w \in \VV \mid \mapV(w)\in V(T_u)\}$.
		Now let $a$ and $b$  be the children of $v$, and 
		let $A := \{ w \in \VV \mid \mapV(w)\in V(T_a)\}$ and 
		$B :=\{ w \in \VV \mid \mapV(w)\in V(T_b)\}$.
		Node~$v$ is not a leaf; thus~$\Scap(v)=0$, that is, no node of~$G_R$ can be mapped onto~$v$.
		By the definition in \eqref{eq:dp-non-leaf} we obtain \ifarxiv{}\else{}$D[X,v] \leq$\fi{}
		\ifarxiv{}
		\begin{align*}
				D[X,v] &\leq 
				D[A,a] + \sum_{e \in \cut^-_{G_R}(A)} \Vcap(e)^\top\Scost(v,a)
				+ \sum_{e \in \cut_{G_R}(A)} \Vcap(e)^\top\Scost(a,v)
					   +\\
				&+D[B,b] + \sum_{e \in \cut^-_{G_R}(B)} \Vcap(e)^\top\Scost(v,b)
				+ \sum_{e \in \cut_{G_R}(B)} \Vcap(e)^\top\Scost(b,v).
		\end{align*}
		\else{}
		\begin{align*}
				&D[A,a] + \sum_{e \in \cut^-_{G_R}(A)} \Vcap(e)^\top\Scost(v,a)
				+ \sum_{e \in \cut_{G_R}(A)} \Vcap(e)^\top\Scost(a,v)
					   +\\
				&D[B,b] + \sum_{e \in \cut^-_{G_R}(B)} \Vcap(e)^\top\Scost(v,b)
				+ \sum_{e \in \cut_{G_R}(B)} \Vcap(e)^\top\Scost(b,v).
		\end{align*}
		\fi{}
		By assumption, \eqref{eq:dp-assumption-Y} holds for~$D[A,a]$ and~$D[B,b]$; so
		\ifarxiv{}
		\begin{align*}
				D[X,v] &\leq  \sum_{w \in A \cup B} \Vcap(w)^\top\Scost(\mapV(w))\\
				&+ \sum_{e \in \cut_{G_R}(A)\cup \cut^-_{G_R}(A) \cup E(G_R[A])} \Big(\sum_{e' \in \mapE(e) \cap E(T_a)} \Vcap(e)^\top\Scost(e')\Big)\\
				&+ \sum_{e \in \cut_{G_R}(B)\cup\cut^-_{G_R}(B) \cup E(G_R[B])} \Big( \sum_{e' \in \mapE(e) \cap E(T_b)}\Vcap(e)^\top\Scost(e')\Big)\\
				&+\sum_{e \in \cut^-_{G_R}(A)} \Vcap(e)^\top\Scost(v,a)
							+ \sum_{e \in \cut_{G_R}(A)} \Vcap(e)^\top\Scost(a,v)\\
				&+ \sum_{e \in \cut^-_{G_R}(B)} \Vcap(e)^\top\Scost(v,b)+
							\sum_{e \in \cut_{G_R}(B)} \Vcap(e)^\top\Scost(b,v).
		\end{align*}
		\else{}
		\begin{align*}
				&D[X,v] \leq  \sum\nolimits_{w \in A \cup B} \Vcap(w)^\top\Scost(\mapV(w))+\\
				& \sum_{e \in \cut_{G_R}(A)\cup \cut^-_{G_R}(A) \cup E(G_R[A])} \Big(\sum_{e' \in \mapE(e) \cap E(T_a)} \Vcap(e)^\top\Scost(e')\Big)+\\
				& \sum_{e \in \cut_{G_R}(B)\cup\cut^-_{G_R}(B) \cup E(G_R[B])} \Big( \sum_{e' \in \mapE(e) \cap E(T_b)}\Vcap(e)^\top\Scost(e')\Big)+\\
				&\sum\nolimits_{e \in \cut^-_{G_R}(A)} \Vcap(e)^\top\Scost(v,a)
							+ \sum\nolimits_{e \in \cut_{G_R}(A)} \Vcap(e)^\top\Scost(a,v)+\\
				& \sum\nolimits_{e \in \cut^-_{G_R}(B)} \Vcap(e)^\top\Scost(v,b)+
							\sum\nolimits_{e \in \cut_{G_R}(B)} \Vcap(e)^\top\Scost(b,v).
		\end{align*}
		\fi{}
		Note that every path from a node in~$T_a$ ($T_b$) to a node in~$T_b$ ($T_a$) 
		contains the edges $(a,v),(v,b)$ ($(b,v),(v,a)$).
		Moreover, for $c \in \{a,b\}$ every path from $T_c$ to some node in $G_S - V(T_v)$ contains the edge $(c,v)$
and every path from $G_S - V(T_v)$ to some node in $T_c$ contains the edge $(v,c)$.
		Hence, we obtain 
		\ifarxiv{}
		\begin{align*}
				D[X,v] &\leq \sum_{w \in A \cup B} \Vcap(w)^\top\Scost(\mapV(w))+ \sum_{e \in  E(G_R[X])} \Big(\sum_{e' \in \mapE(e) \cap E(T_v)} \Vcap(e)^\top\Scost(e')\Big)\\
				&+\sum_{e \in \cut_{G_R}(X) \cup \cut^-_{G_R}(X)} 
			\Big(
				\sum_{e' \in \mapE(e) \cap E(T_v)}
				\Vcap(e)^\top 
				\Scost(e')
			\Big). \qedhere
		\end{align*}
		\else{}
		\begin{align*}
				&D[X,v] \leq \sum\nolimits_{w \in A \cup B} \Vcap(w)^\top\Scost(\mapV(w))+\\
				&\sum\nolimits_{e \in  E(G_R[X])} \Big(\sum\nolimits_{e' \in \mapE(e) \cap E(T_v)} \Vcap(e)^\top\Scost(e')\Big)+\\
				&\sum\nolimits_{e \in \cut_{G_R}(X) \cup \cut^-_{G_R}(X)} 
			\Big(
				\sum\nolimits_{e' \in \mapE(e) \cap E(T_v)}
				\Vcap(e)^\top 
				\Scost(e')
			\Big). \qedhere
		\end{align*}
		\fi{}
\end{proof}
}

\begin{figure*}[t]
		\centering
\ifarxiv{}
	\begin{subfigure}[t]{0.45\textwidth}
		\centering
		\includegraphics[width=0.8\textwidth]{}
	\end{subfigure}
	\qquad
	\begin{subfigure}[t]{0.45\textwidth}
		\centering
		\includegraphics[width=0.8\textwidth]{}
	\end{subfigure}
\else{}
	\begin{subfigure}[t]{0.4\textwidth}
		\centering
		\includegraphics[width=0.65\textwidth]{fattree_actual.pdf}
	\end{subfigure}
	\qquad
	\begin{subfigure}[t]{0.4\textwidth}
		\centering
		\includegraphics[width=0.65\textwidth]{fattree_abstraction.pdf}
	\end{subfigure}
\fi{}
\vspace{-6pt}
	\caption{Fat tree topology~\cite{fattrees} constructed using $f=4$-port switches (left) and corresponding forwarding abstraction (right).}
	\label{fig:fattree}
	\vspace{-6pt}
\end{figure*}

Now we have everything at hand to prove \cref{thm:simple-dp}.

\begin{proof}[Proof of \cref{thm:simple-dp}]
	Let $I=(G_S,G_R,\Vcap,\Scap,\Scost)$ be some instance of \minVNEP{}.
	By \cref{lem:bin-tree,lem:map-to-leafs} we can assume that $G_S$ is a binary tree 
	rooted at some arbitrary node $p$ and  each node $v \in \SV$ with degree at least two fulfills $\Scap(v) = 0$.
	We apply the dynamic program stated in \eqref{eq:dp-leaf} and \eqref{eq:dp-non-leaf}.
	Since  $G_S = T_p$ and $\cut_{G_R}(\VV) = \emptyset$,
	\cref{lem:dp-induction1,lem:dp-induction2} imply that $D[\VV,p]$ contains the minimum cost for a feasible mapping for $I$,
	where $D[\VV,p] = \infty$ if and only if there is no feasible mapping for $I$.

	Let $r := |\VV|$.
	It remains to be shown that $D[\VV,p]$ can be computed in $\mathcal O(3^r(|\SV| + r^2))$ time.
	We first compute for every~$A \subseteq \VV$ the demand of the cut~$\cut_{G_R}(A)$.
	There are~$2^r$ subsets~$A$, for each of which we need to iterate over the~$\mathcal O(r^2)$ edges; thus this step takes~$\mathcal O(2^r \cdot r^2)$ time.
	With this at hand we can compute~$D[X, v]$ in constant time for each leaf~$v \in \SV$ and for each subset~$X \subseteq \VV$.
	For a non-leaf node~$v$, computing the entries~$D[X, v]$ for each~$X \subseteq \VV$ can be done in~$\mathcal O(3^r)$ operations:
	For a partition~$X = A \uplus B$ we require constant time.
	Observe that there are~$3^r$ partitions of~$\VV$ into three parts~$A \uplus B \uplus C$.
	Thus, choosing~$X = \VV \setminus C$ gives us all partitions of all subsets~$X \subseteq \VV$ into two parts~$A$ and~$B$.
	Thus, for all non-leaf nodes~$v$ and all subsets~$X \subseteq \VV$ combined we require~$\mathcal O(3^r \cdot |\SV|)$ time.
	Altogether, this yields the claimed running time of $\mathcal O(3^r \cdot (|\SV| + r^2))$.
\end{proof}

As a final note, we highlight that our dynamic program is rather simple to implement 
and robust in the sense that it also works if one has further natural constraints or other objectives.

\section{Evaluation}
\newcommand{\ErdosRenyi}{Erd\H{o}s-R\'enyi}
\label{sec:eval}
\ifarxiv{}\else{}
\begin{figure*}[t]
	\centering
	\begin{subfigure}[t]{0.32\textwidth}
			\centering
			\includegraphics[width=\textwidth]{./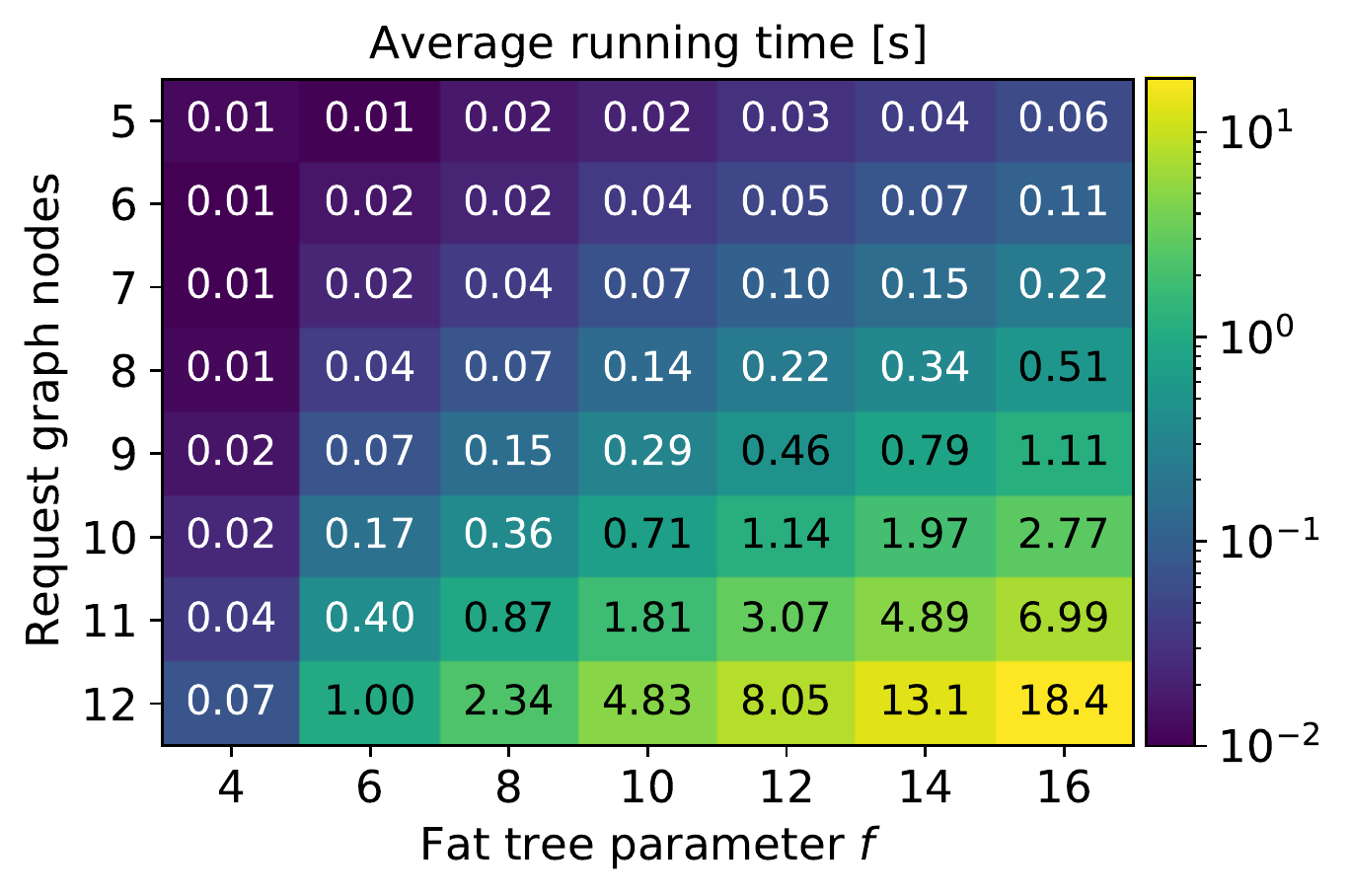}
			\vspace{-0.6cm}
		\caption{Dynamic program}
	\end{subfigure}
	\begin{subfigure}[t]{0.32\textwidth}
			\centering
		\includegraphics[width=\textwidth]{./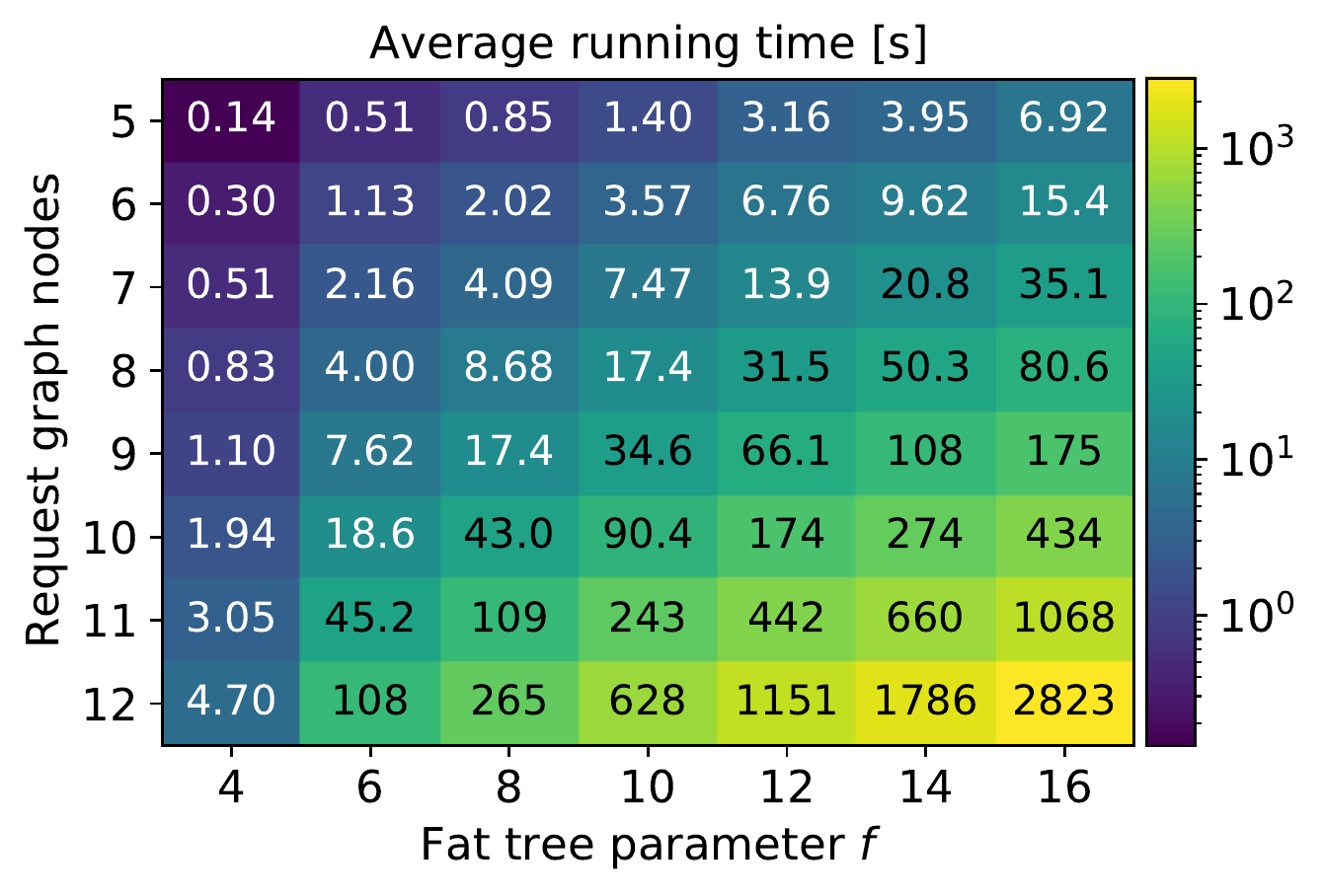}
			\vspace{-0.6cm}
		\caption{Integer program}
	\end{subfigure}
	\begin{subfigure}[t]{0.32\textwidth}
		\centering
		\includegraphics[width=\textwidth]{./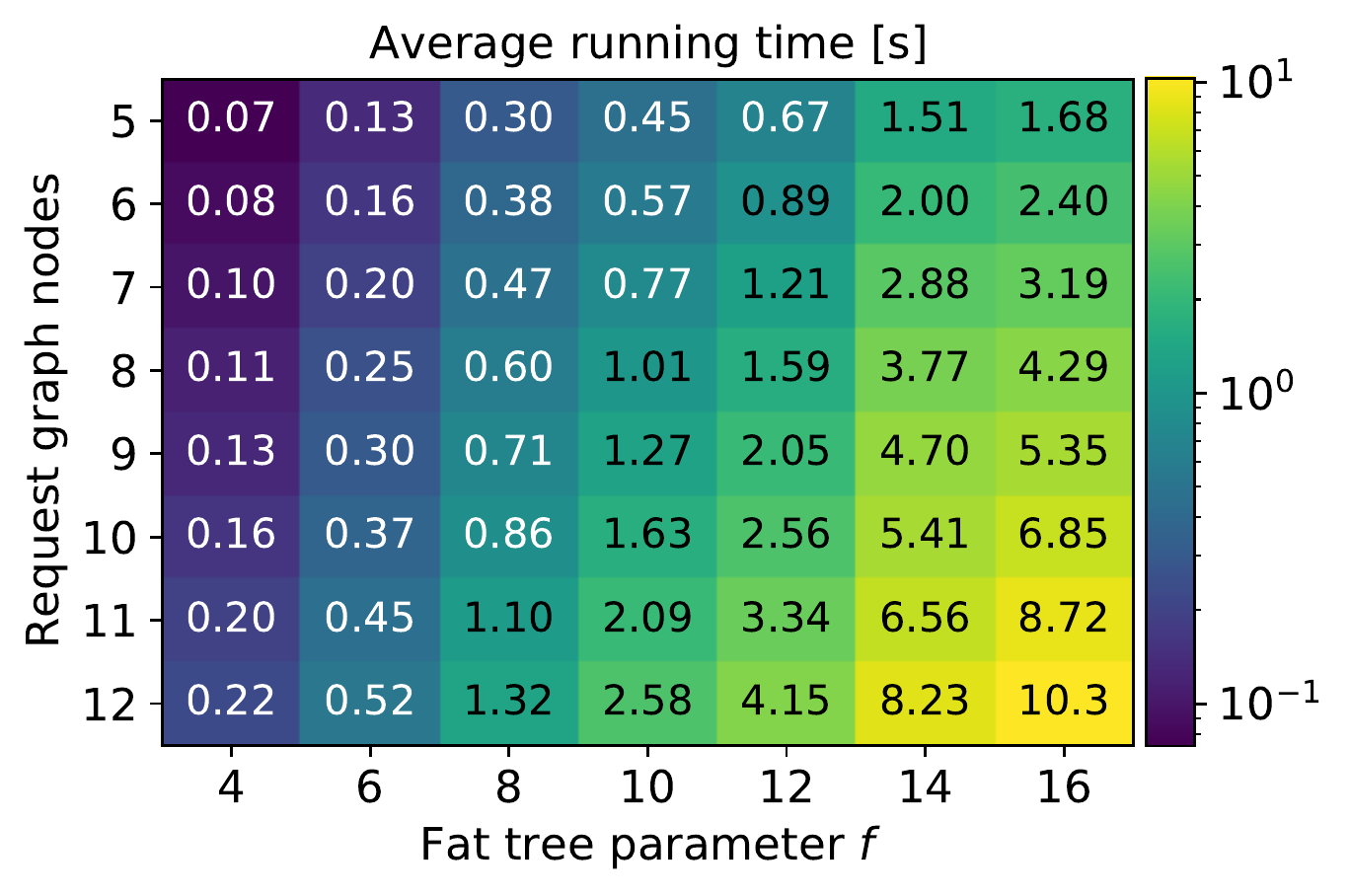}
			\vspace{-0.6cm}
		\caption{ViNE}
	\end{subfigure}
	\vspace{-0.3cm}
	\caption{Running time statistics in seconds. Each heatmap cell averages 100 instances of different \ErdosRenyi{} request graphs, 10 for each connection probability~$p \in \{0.1, \dots, 1.0\}$. Recall that for the integer program the time limit is set to $200\times$  the dynamic program's running time. Note the different (logarithmic) z-axes.
} 
	\label{fig:runtime}
	\vspace{-12pt}
\end{figure*}
\fi{}

We evaluate the performance of our exact dynamic programming algorithm for tree substrates 
(presented in Section~\ref{sec:robust-dp} and henceforth abbreviated with DP)
on common fat tree topologies as they are widely deployed, e.g., in data centers~\cite{fattrees}.
Specifically, we compare the performance of our algorithm with two well-established approaches for solving the~$\VNEP$.
The first is the standard integer programming formulation (IP) which gives exact results.
The second is the ViNE heuristic by \citet{vnep}, which takes the relaxation of an IP formulation and then applies randomized rounding to fix node mappings and realizes edges via shortest paths.
In our comparisons the focus is on the running time and the solution quality of the three approaches.
Since the running time of the IP may take hours for medium-sized instances, we set a time limit on the IP, and we also report on the quality of the sub-optimal solutions obtained by the IP when the imposed time limit was reached.
Recall that the solution obtained by our DP is \emph{always} optimal.

\paragraph{Testing Methodology.}
\label{sec:eval:methodology}

For our evaluation, we employ fat trees~\cite{fattrees} as our substrate network topology.
Fat trees are common topologies, e.g., in data centers 
 built using commodity switches, %
where each switch has the same number $f \geq 4$ of ports.
Fat trees are highly structured: servers are located at the bottom and are connected by a three-layer hierarchy of switches (see \cref{fig:fattree}).
A fat tree constructed of $f$-port switches connects up to $f^3/4$ servers.
While the actual physical infrastructure is not a tree, \emph{the forwarding abstraction} provided by fat trees is a tree. %
Specifically, based on link aggregation techniques~\cite{link-aggregation}, switches and their interconnections are logically aggregated from an application-level perspective.
Hence, embeddings can and must be computed on this tree forwarding abstraction.
Note that \minVNEP{} is clearly NP-hard on such trees (see \cref{obs:np-hard}).

We consider seven different fat tree forwarding abstractions for $f \in \{4,6,\ldots,16\}$, hosting between 16 and 1024 servers and using between 5 and 145 switches.
Considering a single node resource type, we set the computational capacities on servers to 1 and on switches to 0.
For edges of the bottom layer, i.e., connecting to servers we set a bandwidth of one.
Due to the aggregation of edges, the edge bandwidth of the above layers is set accordingly to $f/2$ and $(f/2)^2$.
To simulate heterogeneous usage patterns within the data center, we perturb node and edge capacities by random factors drawn from the interval $[1,10]$ and draw costs from~$[1,10]$.

For generating requests, we follow the standard approach of sampling \ErdosRenyi{}-topo\-logies of various sizes~\cite{vnep, vnep-survey}.
In this model, for a specific number of nodes, edges between pairs of nodes are created probabilistically using a connection probability~$p$.
This approach is attractive, as it does not impose assumptions on the applications modeled by the requests albeit allowing to easily vary the interconnection density.
Again, following the standard evaluation methodology~\cite{vnep, vnep-survey}, node and edge demands are also sampled uniformly at random.
Specifically, node demands are drawn from the interval $[1,5]$.
For edge demands, we proceed as follows.
For each node, we draw the total cumulative outgoing bandwidth from $[1,5]$ and then distribute the bandwidth randomly across the actual edges.
By this construction, the expected total bandwidth (per request size) is independent from the connection probability~$p$. 

For our evaluation we focus on requests of~5 to~12 nodes and consider ten different connection probabilities $p \in \{0.1, 0.2, \ldots, 1.0\}$ (disconnected graphs are discarded and resampled).
For each combination of graph size and connection probability, we sample ten instances. Together with the $7$ different fat tree topologies, our computational study encompasses 5.6k instances.

\paragraph{Computational Setup.}
\label{sec:eval:baseline-and-setup}
\appendixsection{sec:eval:baseline-and-setup}
We first discuss the implementation of our dynamic program (DP), 
the integer programming (IP),
and ViNE.

We have implemented the dynamic program presented in \cref{sec:robust-dp} in C++ using only the standard library.
While implemented for single node and edge resources, our implementation can be easily extended to an arbitrary number of resources.
Our implementation is tweaked to skip computations that involve table entries containing~$\infty$, 
as these cannot lead to %
a feasible solution.
Furthermore we do not store table entries that contain~$\infty$.
To facilitate this, we store the table entries for a node~$v$ as a set-trie, rather than a simple array, to allow for fast subset and superset queries.
During our experiments we discovered that on instances with 12-node requests this tweak resulted in a decrease of~$90\%$ in table size, and a corresponding drop in the running time is to be expected.
The source of our implementation is available online.\footnote{\url{https://git.tu-berlin.de/akt-public/vnep-for-trees}}

Existing exact algorithms for the $\VNEP$ in the literature are essentially all based on integer programming~\cite{vnep-survey}.
Especially one integer programming formulation, based on multi-commodity flows, has been studied extensively~\cite{vnep,infuhr2011introducing,rostSchmidFeldmann2014,rostSchmidVNEP_RR_IFIP_18}.
\ifarxiv{} \else{}More details on the Integer Program can be found in the full version of this paper.\fi{}
\toappendix{
	\ifarxiv{}\paragraph{Integer Program for \minVNEP.}\else{}\subsection{Integer Program for \minVNEP}\fi{}
		\label{app:sec:eval:baseline-and-setup}
We revisit the integer programming formulation used in our evaluation, introduced below as Integer Program~\ref{alg:ip:formulation}.
The binary variables $y^u_i \in \{0,1\}$ indicate whether the request node~$i \in \VV$ is mapped onto substrate node $u \in \SV$ 
The binary variables $z^{u,v}_{i,j} \in \{0,1\}$ indicate whether the substrate edge $(u,v) \in \SE$ lies on the path used by the request edge $(i,j) \in \VE$. 
By Constraint~\eqref{alg:ip:node-embedding}, all request nodes must be mapped.
Constraint~\eqref{alg:ip:node-forbidding} forbids the mapping onto nodes not providing sufficient capacities. 
Constraint~\eqref{alg:ip:edge-embedding} induces a unit flow for each request~edge~$(i,j) \in \VE$ between the nodes onto which $i$ and $j$ have been mapped, respectively. 
Constraint~\eqref{alg:ip:edge-forbidding} forbids the mapping of request edges onto substrate edges not providing sufficient capacities and Constraints~\eqref{alg:ip:node-capacity} and~\eqref{alg:ip:edge-capacity} safeguard that capacities are not violated. 
The formulation naturally models the $\minVNEP$ objective.

{
	\renewcommand{\figurename}{Integer Program}
	\begin{figure}[h!]
		\SetAlgorithmName{Integer Program}{}{{}}
		\centering
		\scalebox{0.95}{
			\renewcommand{\arraystretch}{1.}	

			\begin{tabular}{\ipMR \ipMRspace \ipMLspace \ipMLmorespace \ipMR}
				\multicolumn{4}{\ipMC}{ \min \left(\begin{array}{rl}
						&  \sum \limits_{i \in \VV, u \in \SV} y^u_i \left(\Vcap(v)^\top \cdot \Scost(u)\right) \\
						+ & \sum \limits_{(i,j) \in \VE, (u,v) \in \SE } z^{u,v}_{i,j}  \left(\Vcap(i,j)^\top\cdot \Scost(u,v)\right)
					\end{array}\right)} &  \tagIt{alg:ip:obj} \\
				
				\sum_{u \in \SV } y^u_{i} & = & 1 & \forall i \in \VV &  \tagIt{alg:ip:node-embedding} \\
				
				\sum_{u \in \SV: \Scap(u) \not \leq \Vcap(i)} y^u_{i} & = & 0 & \forall i \in \VV &  \tagIt{alg:ip:node-forbidding} \\
				
				\left( \begin{array}{l}
					\sum \limits_{(u,v) \in \cut^+_{\SG}(u)}   z^{u,v}_{i,j} \\
					-  \sum \limits_{(v,u) \in \cut^-_{\SG}(u)}  z^{v,u}_{i,j}
				\end{array} \right) & = & y^u_{i} - y^u_{j} & \forall (i,j) \in \VE, u \in \SE &  \tagIt{alg:ip:edge-embedding} \\
				
				\sum_{(u,v) \in \SE: \Scap(u,v) \not \leq \Vcap(i,j) } \hspace{-6pt} z^{u,v}_{i,j} & = & 0 & \forall (i,j) \in \VE &  \tagIt{alg:ip:edge-forbidding} \\
				
				\sum_{i \in \VV} \Vcap(i) \cdot y^u_{i} & \leq & \Scap(u) & \forall u \in \SV & \tagIt{alg:ip:node-capacity} \\
				
				\sum_{(i,j) \in \VE} \Vcap(i,j) \cdot z^{u,v}_{i,j} & \leq & \Scap(u,v) & \forall (u,v) \in \SE & \tagIt{alg:ip:edge-capacity} \\
				
				y^u_i & \in & \{0,1\} & \forall i \in \VV, u \in \SV & \tagIt{alg:ip:var-y}\\
				z^{u,v}_{i,j} & \in & \{0,1\} & \forall (i,j) \in \VE, (u,v) \in \SE & \tagIt{alg:ip:var-z}
			\end{tabular}
		}
		\caption{MCF $\minVNEP$ Formulation}
		\label{alg:ip:formulation}	
	\end{figure}
}

}%

To construct the integer program, we employ a simple GMPL model and translate it into an LP-file using GLPSOL. We then solve the integer program using the commercial solver Gurobi~8.1.1.  We set the thread limit of Gurobi to 1, to allow for a fair comparison with the single-threaded dynamic program.
As the running time of the IP drastically exceeds the running time of the DP, for each instance we employ a time limit of 200 times the running time of the DP. Notably, the time to construct the LP-files using the unoptimized GLPSOL command is not counted towards the running time of the IP, as it often exceeded it by a factor of 3 even on smaller instances.

For the ViNE baseline, we use the Python 2 implementation of \citet{rostSchmidFeldmann2014} with Gurobi 8.1.1 to solve the LP relaxation. 
Given a solution for the LP relaxation, we try~25 times to obtain a feasible solution by randomized rounding.
For more details on the ViNE heuristic, we refer to \citet{vnep} and \citet{rostSchmidFeldmann2014}.

\paragraph{Results.}
\label{sec:eval:results}
\ifarxiv{}
\begin{figure*}[t]
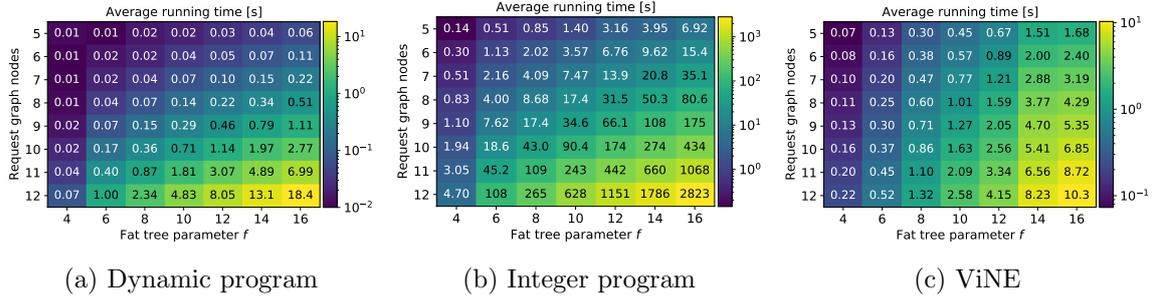

	\centering
	\begin{subfigure}[t]{0.32\textwidth}
			\centering
			\includegraphics[width=\textwidth]{./experiments/erdosrenyi5/plots/erdos/runtime_dp.pdf}
		\caption{Dynamic program}
	\end{subfigure}
	\begin{subfigure}[t]{0.32\textwidth}
			\centering
		\includegraphics[width=\textwidth]{./experiments/erdosrenyi5/plots/erdos/runtime_gmpl.pdf}
		\caption{Integer program}
	\end{subfigure}
	\begin{subfigure}[t]{0.32\textwidth}
		\centering
		\includegraphics[width=\textwidth]{./experiments/erdosrenyi5/plots/erdos_vine25/runtime_vine25.pdf}
		\caption{ViNE}
	\end{subfigure}
	\caption{Running time statistics in seconds. Each heatmap cell averages 100 instances of different \ErdosRenyi{} request graphs, 10 for each connection probability~$p \in \{0.1, \dots, 1.0\}$. Recall that for the integer program the time limit is set to $200\times$  the dynamic program's running time. Note the different (logarithmic) z-axes.
} 
	\label{fig:runtime}
	\vspace{-12pt}
\end{figure*}
\fi{}
\begin{figure*}[t]
	\centering
	\begin{subfigure}{0.32\textwidth}
		\includegraphics[width=\textwidth]{./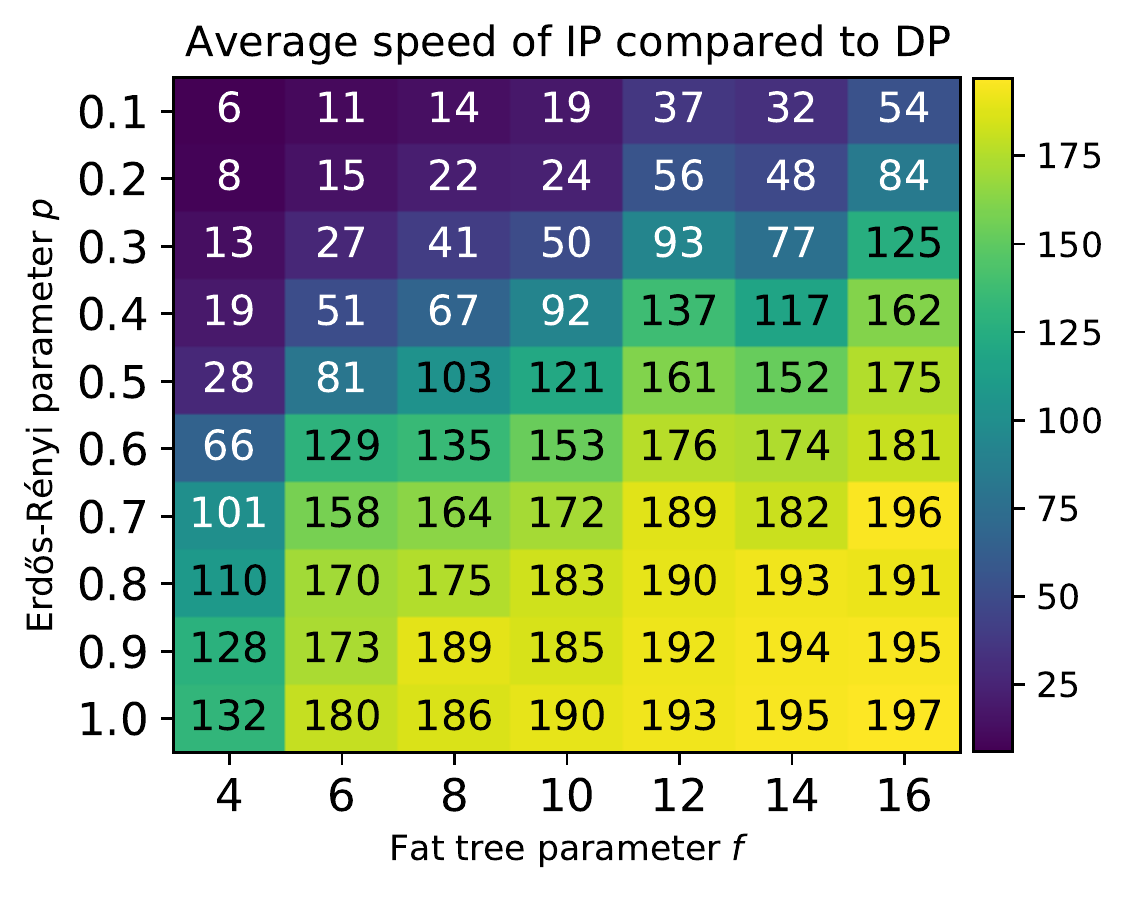}
		\caption{Average ratio of the running time of the integer and the dynamic program. Recall the IP's time limit of $200\times$  the DP's one.}
		\label{fig:ip-speedup}
	\end{subfigure}
	\hfill
	\begin{subfigure}{0.32\textwidth}
		\includegraphics[width=\textwidth]{./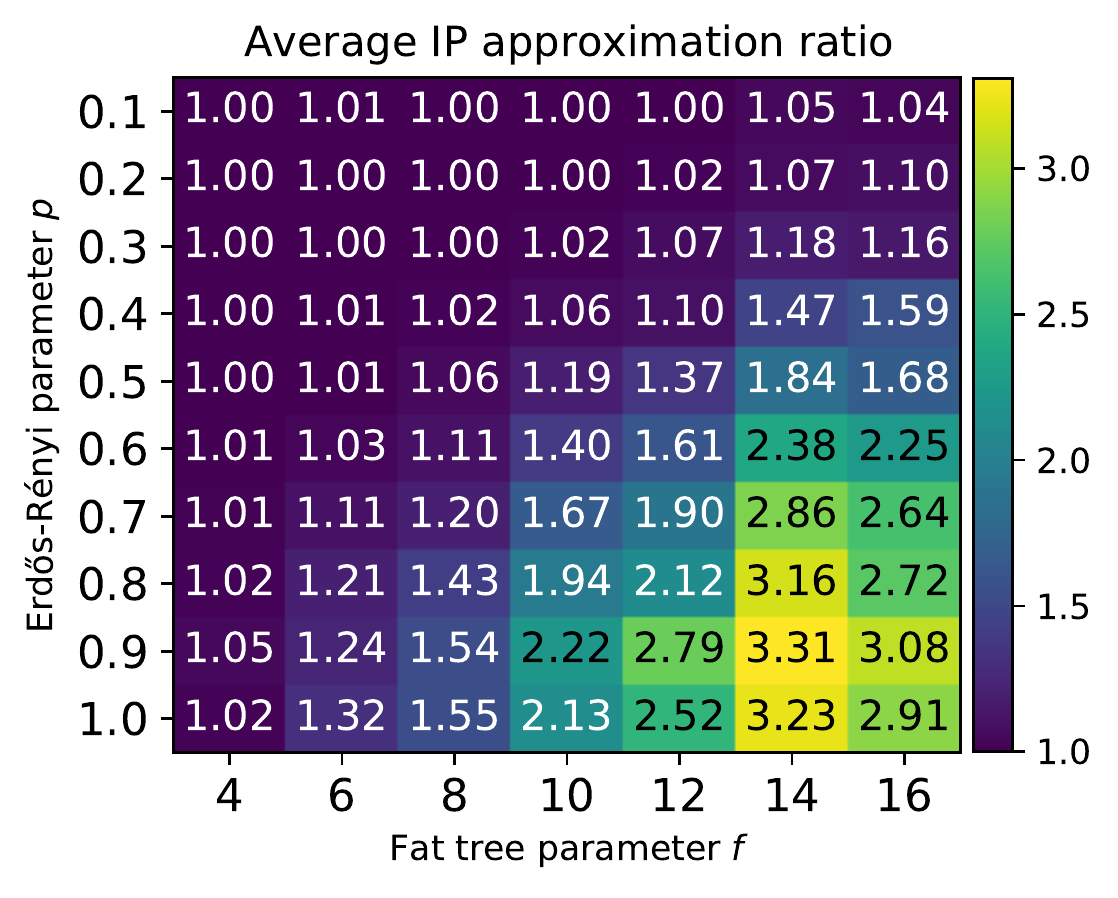}
		\caption{Average ratio of the lowest embedding cost found by the IP within the time limit to the optimal minimum cost found by the DP.}
		\label{fig:ip-approx}
	\end{subfigure}
	\hfill
	\begin{subfigure}{0.32\textwidth}
		\includegraphics[width=0.82\textwidth]{./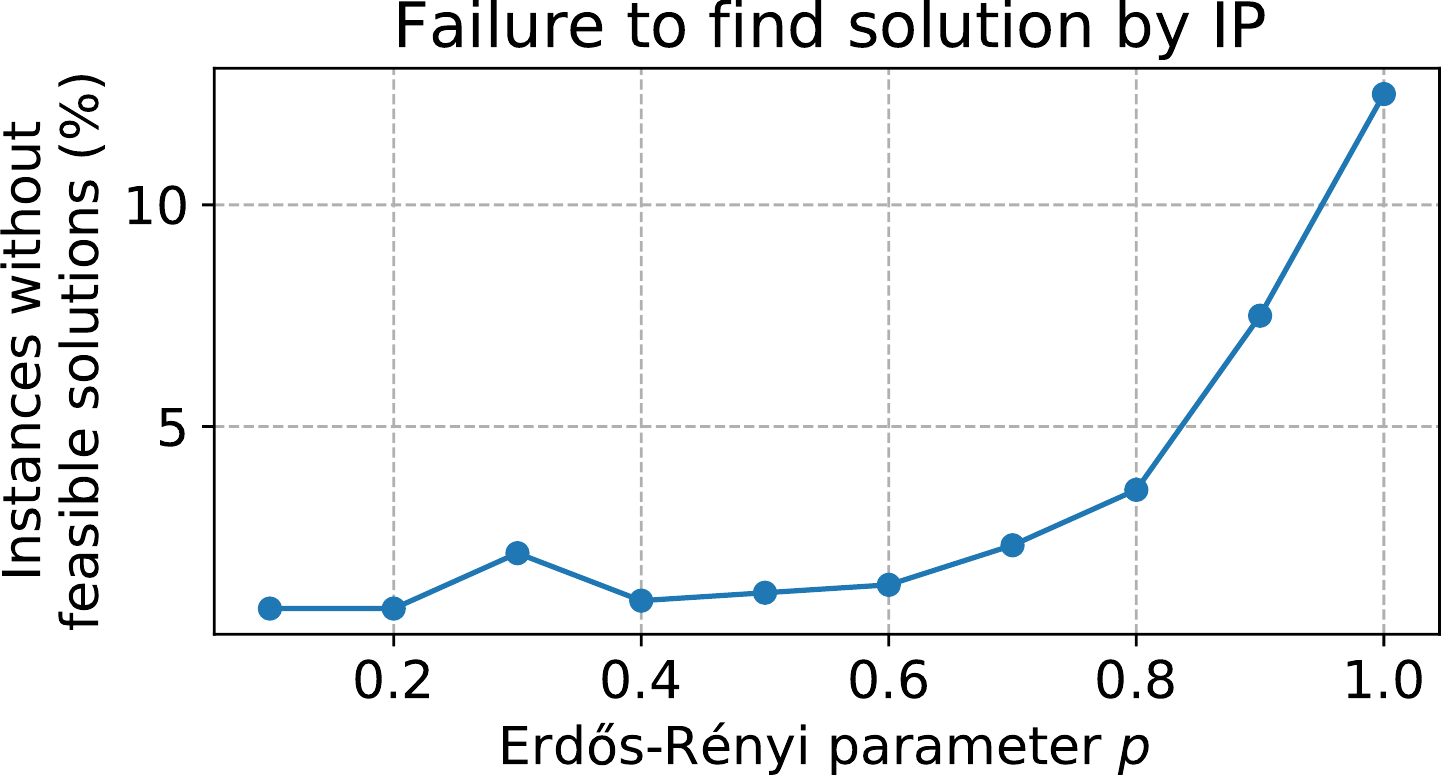}
		\vspace{6pt}
		\includegraphics[width=0.82\textwidth]{./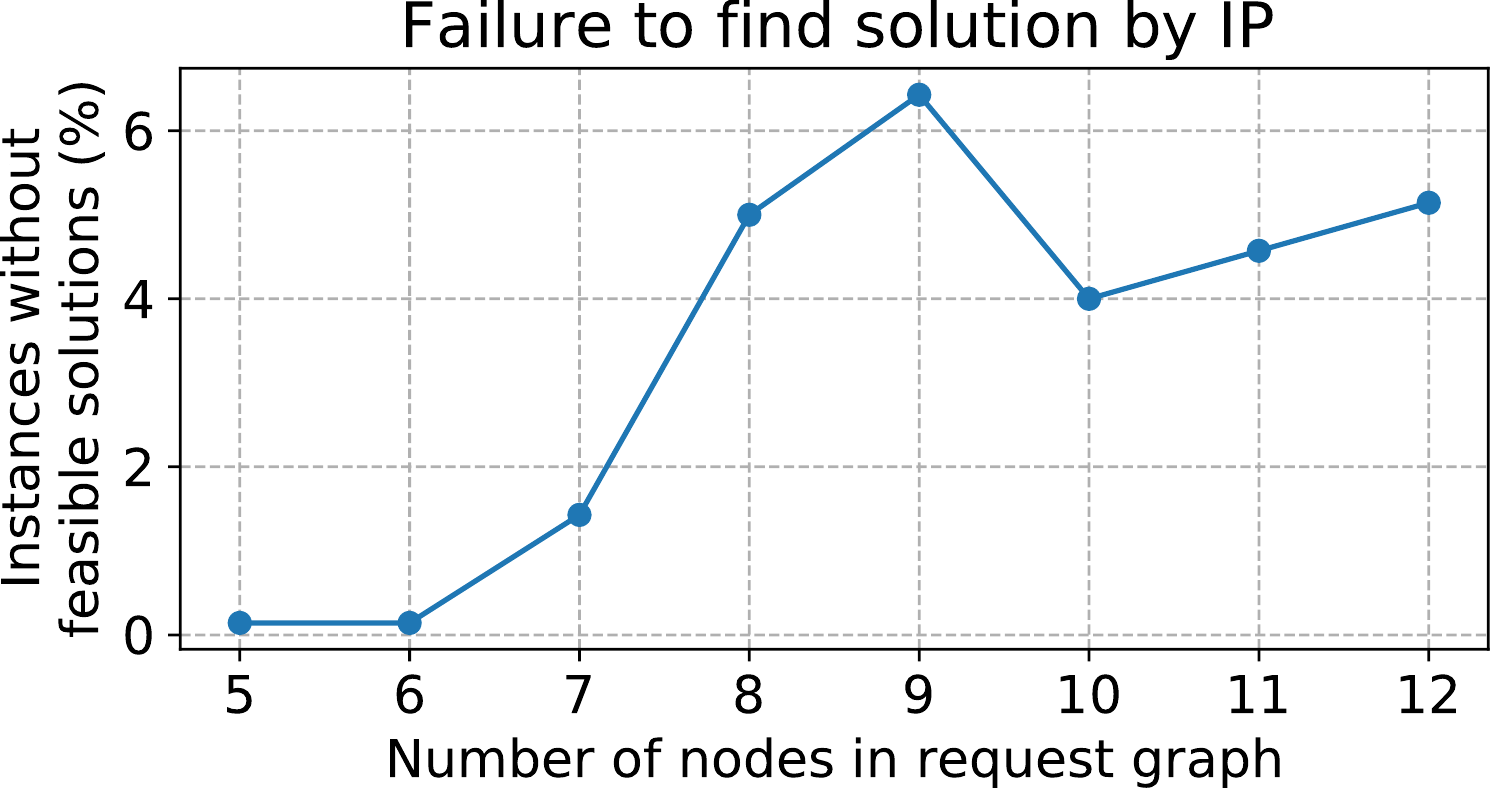}
		\vspace{-8pt}
		\caption{Analysis of the IP's failure to produce a feasible solution within the time limit.}
		\label{fig:ip-failure}
	\end{subfigure}
\vspace{-8pt}
	\caption{Comparison to the IP in terms of running time ratio, approximation ratio, and feasible solutions. Each heatmap cell averages 80 instances of \ErdosRenyi{} request graphs of sizes 5--12.}
	\label{fig:ip}
	\vspace{-8pt}
\end{figure*}
\begin{figure*}[t]
	\centering
	\begin{subfigure}[t]{0.47\textwidth}
		\centering
		\includegraphics[width=0.65\columnwidth]{./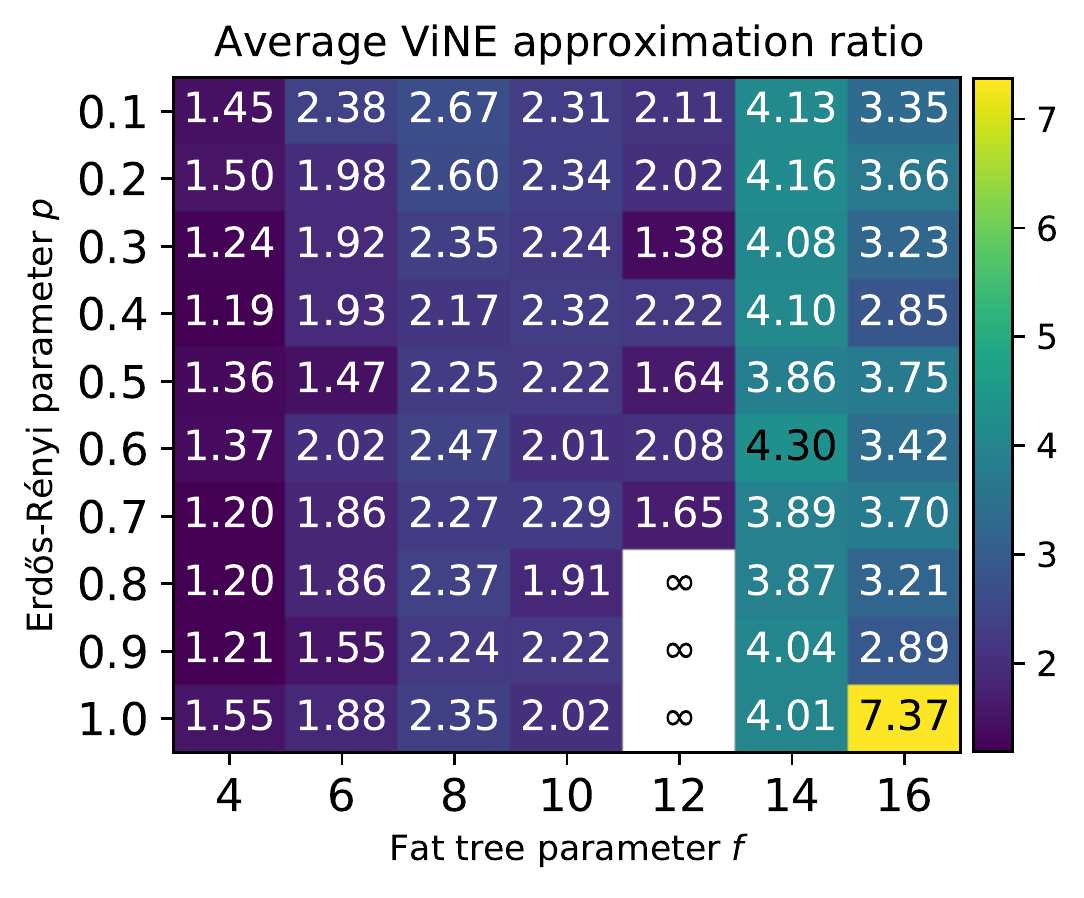}
		\vspace{-0.3cm}
		\caption{Average ratio of the embedding cost found by ViNE to the optimal minimum cost embedding found by the DP.}
		\label{fig:vine-approx}
	\end{subfigure}
	\hfill
	\begin{subfigure}[t]{0.47\textwidth}
		\centering
		\includegraphics[width=0.65\columnwidth]{./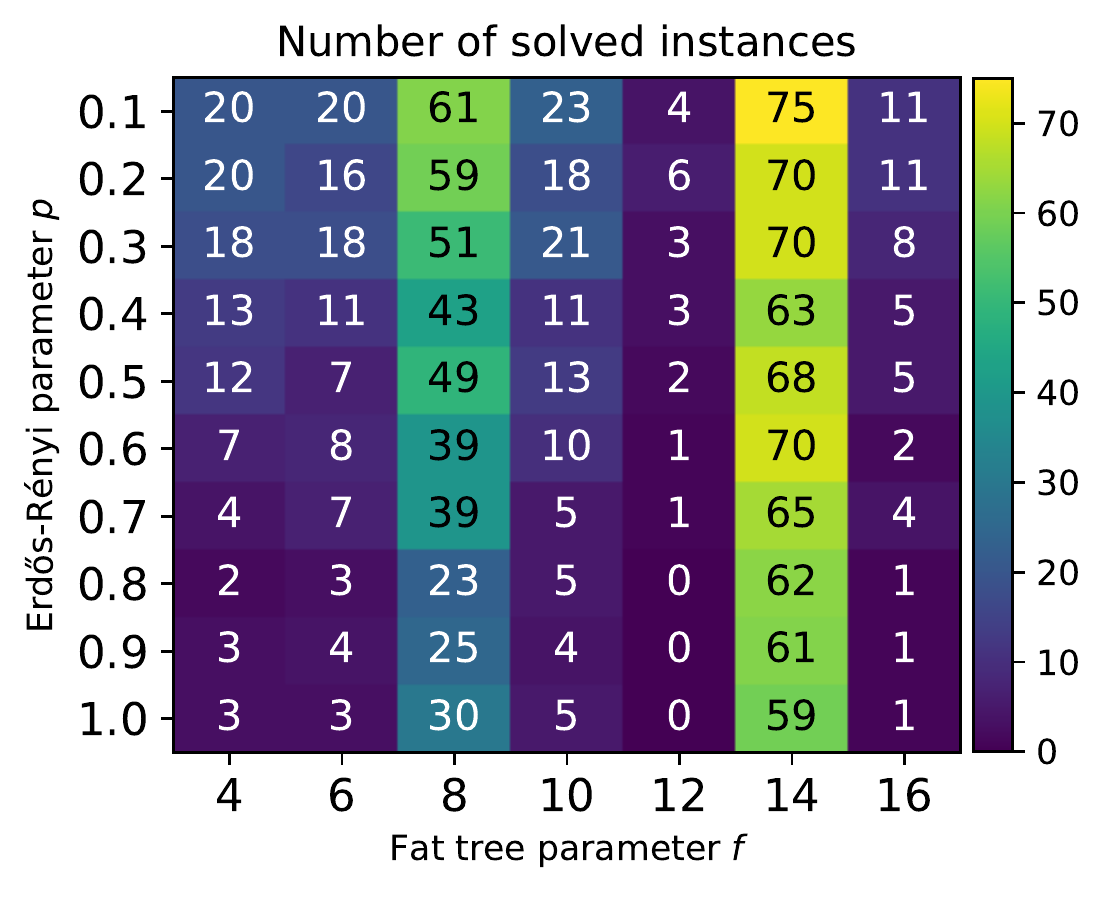}
		\vspace{-0.3cm}
		\caption{Number of instances for which ViNE found a feasible solution out of 80 possible per cell.}
		\label{fig:vine-solved}
	\end{subfigure}
\vspace{-8pt}
	\caption{Comparison to ViNE in terms of approximation quality and number of solved instances. Each heatmap cell averages 80 instances of \ErdosRenyi{} request graphs of sizes 5--12.}
	\label{fig:vine}
	\vspace{-8pt}
\end{figure*}

We compared the implementations on servers equipped with an Intel Xeon W-2125 4-core, 8-thread CPU clocked at 4.0 GHz and 256GB of RAM running Ubuntu 18.04.
In \Cref{fig:runtime}, the running times of our dynamic program (DP) as well as the integer program (IP) and the ViNE heuristic are depicted.
The running time of the DP increases on average by a factor of 2 to 3 with the number of nodes of the request graph.
Notably, this factor lies beneath the proven factor of~$3$~(see~Section~\ref{sec:robust-dp}), as our implementation of the DP skips some redundant computations.
The running time of the IP increases exponentially as well, however due to the enforced time limit, specific growth values could not be gathered. 
The running time of the IP exceeds the one of the DP by at least~$10\times$ for more than~$98.5\%$ of the instances and by at least~$100\times$ for more than~$61.4\%$ of the instances.
The DP is faster than ViNE in~$85\%$ of the instances; the running time of ViNE is better than the one of the DP whenever both the request and the substrate graphs become large.

In \cref{fig:ip-speedup} we further analyze the speedup of the DP over the IP and how it relates to the parameters that control the size of the substrate and density of the requests.
It can be seen that the speedup of the DP increases for larger values of~$p$ and~$f$.
This is likely due to the fact that the number of variables in the IP is~$\mathcal O(|\SV| \cdot (|\VV| + |\VE|))$, while the running time of the DP has exponential dependence only on~$|\VV|$.
The average speedup on instances with large $f$ and $p$ is close to~$200$, meaning that almost always the~$200\times$ time limit was reached.
To better understand the impact of this premature termination, we also report on the (empirical) approximation ratio achieved by the integer program in \cref{fig:ip-approx}:
For instances that the IP could not solve exactly within the time limit, there is a substantial gap in the embedding cost.
Moreover, there were~152 instances (2.7\%) for which the IP could not produce an initial feasible solution within the time limit; note that the DP produced the optimal solution while being~$200\times$ quicker.
\Cref{fig:ip-failure} gives insights into the instances for which this case was encountered.
One can see that the IP struggles to construct solutions for requests with high connectivity~$p$.
The peak number of instances for which the IP did not produce a solution was observed for requests of graph size $9$.
We believe the reason for this to be that the IP spent more time on initialization efforts, such as computing the root linear programming relaxation.

Next, we compare our DP to the ViNE heuristic in terms of approximation quality (see~\cref{fig:vine}).
One can observe that, as opposed to the IP, the approximation ratio of ViNE slightly improves with growing connection probability~$p$ (see~\cref{fig:vine-approx}).
But with growing fat tree parameter~$f$, the solution quality decreases, with ViNE returning a feasible solution only for very few instances (see~\cref{fig:vine-solved}).
Notably, except for fat tree sizes~$f=8$ and~$f=14$, ViNE finds feasible solutions for only~$26\%$ of all instances.
Considering running time and approximation ratio combined we observed that there are~$839$ instances ($15\%$) for which ViNE was faster than the DP.
In~$130$ of those, ViNE found feasible solutions with an average approximation ratio of~$3.64$ and a speedup factor of~$2.64$.

\paragraph{Discussion.}
\label{sec:eval:discussion}

The above results have shown that our dynamic programming algorithm (DP) consistently outperforms the classical integer programming formulation (IP) for~$\minVNEP$ as well as the well-established ViNE heuristic.
While the formulations of the IP and of ViNE may be improved, e.g., by exploiting the tree structure of the substrate, we believe it to be highly unlikely to be possible to close the tremendous performance gap.
Accordingly, we consider the DP a valuable alternative to integer programming based algorithms as well as heuristics based on linear programming relaxations, for request graphs of small or medium size.
For requests on dozens of nodes, a direct application of our DP seems prohibitive, however.
Here, an interesting approach would be to reduce the size of requests to speed up the algorithm heuristically by using clustering techniques.
As already shown by \citet{fuerst-collocation-preclustering-cloudnet-2013}, heuristic and optimal (pre-)clustering schemes to reduce the request size can be beneficial.
Also \citet{mano-vnep-equest-graph-reductions-infocom-2016} discuss request graph reductions and showed that the cost of embedding reduced request graphs only increases linearly while reducing the running times by exponential factors.
We hence consider this an interesting avenue for developing heuristics based on the dynamic program presented in this work; in this way, one may scale beyond medium-sized requests.

\section{Conclusion}
\label{sec:conlusion}

We initiated the study of a parameterized algorithmics approach for the 
fundamental Virtual Network Embedding Problem which lies at the heart of
emerging innovative network architectures that can be tailored
to the application needs.
In particular, we have shown that despite the general hardness of the problem,
efficient and exact algorithms do exist for practically relevant scenarios.
We understand our work as a first step and believe that
it opens several interesting avenues for future research.
In particular, it would be interesting to further investigate the power of polynomial-time data reduction 
through a parameterized lens, also known as kernelization in parameterized algorithmics. 


{ 
{
	\bibliographystyle{plainnat}
	\bibliography{literature}
}
}


\end{document}